\newtheorem{thm}{Theorem}
\newtheorem{lemma}[thm]{Lemma}
\theoremstyle{definition}
\theoremstyle{remark}
\newcommand*{\set}[1]{\left\{#1\right\}}
\newcommand*{\good}{G}
\newcommand*{\bad}{B}
\newcommand*{\x}{X}
\newcommand*{\y}{Y}
\title{Price competition with uncertain quality and cost}
\author{Sander Heinsalu\thanks{Research School of Economics, Australian National University, 
25a Kingsley St, Acton ACT 2601, Australia.
Email: sander.heinsalu@anu.edu.au, 
website: \url{https://sanderheinsalu.com/}
}}
\date{\today}
\begin{document}
\maketitle

\begin{abstract}
Consumers in many markets are uncertain about firms' qualities and costs, so buy based on both the price and the quality inferred from it. Optimal pricing depends on consumer heterogeneity only when firms with higher quality have higher costs, regardless of whether costs and qualities are private or public. If better quality firms have lower costs, then good quality is sold cheaper than bad under private costs and qualities, but not under public. However, if higher quality is costlier, then price weakly increases in quality under both informational environments. 

Keywords: Bertrand competition, price signalling, incomplete information, price dispersion. 
	
JEL classification: D82, C72, D43. 
\end{abstract}


In many markets, buyers are uncertain about the qualities and costs of sellers. In that case, purchasing decisions depend on both the price and the quality that the buyers infer from price (as opposed to the actual quality). This paper shows that the pricing decisions of the firms differ based on customer heterogeneity when higher quality producers have greater marginal cost, 
irrespective of whether costs and qualities are private or public. 

If firms privately know their quality and cost, and a higher quality firm has a smaller marginal cost,\footnote{
Many industries have higher quality associated with a lower cost, e.g.\ mutual funds \citep{gil-bazo+ruiz-verdu2009}, cotton weaving \citep{bloom+2013}, medical innovations \citep{nelson+2009}. Additional empirical examples and theoretical reasons for negatively correlated quality and cost are in \cite{heinsalu2019a}. 
} 
then it signals its quality by pricing lower than a bad quality rival. By contrast, with public qualities and costs, better quality is priced higher. 
In markets where higher quality providers have larger costs, the price is constant or increases in quality under both informational environments. 

In the markets considered in this paper, competing firms independently draw a type, either \emph{good} or \emph{bad}. The good type has better quality than the bad, and in the main model also higher marginal cost. Consumers have heterogeneous valuations for the firms' products, with a greater valuation also implying a weakly larger premium for quality. Each player knows her own type, but other players only have a common prior over the types. The firms simultaneously set prices, which the consumers observe. Then each consumer chooses either to buy from one of the firms or leave the market. 
Consumers Bayes update their beliefs about the types of the firms based on the prices. 
The equilibrium notion is perfect Bayesian equilibrium (PBE). 

In equilibrium, price is above the competitive level, regardless of whether the good and bad types pool or the good quality firms signal their type by raising price. 
Prices in pooling and semipooling equilibria exceed the cost of the good type, thus both types of at least one firm make positive profit. In separating equilibria, both firms' bad types obtain positive profit. 

By contrast, complete information Bertrand competition between identical firms, whether good or bad quality, leads to zero profit and a lower price than under incomplete information. Complete information competition between a good and a bad firm may raise price compared to private cost and quality, but one of the types still makes zero profit. 
The \emph{ex ante} expected price may be higher or lower under complete information. 
The \emph{ex ante} price dispersion under public types exceeds that under private. 
If the cost and quality differences between the types vanish or the probability of the good type goes to zero, then the complete and incomplete information environments converge to the same price: the marginal cost of the bad type. 
The outcome in this paper is independent of whether the firms observe each other's cost or quality, but relies on consumers not observing these. 

The equilibrium pricing differs from a privately informed monopolist, and from competition when consumers find it costly to learn prices. 
A monopolist with good quality signals its type by a high price. 
When observing the prices of competing firms is costly, the outcome under complete information is monopoly pricing \citep{diamond1971}. Incomplete information leads to above-monopoly pricing in costly search, except when quality and cost are negatively correlated, in which case pricing is competitive \citep{heinsalu2019a}. 

\textbf{Literature.} 
The signalling literature started from \cite{spence1973}, and price as a signal was studied in \cite{milgrom+roberts1986}. In the present paper, the consumers are the receivers of the price signal, unlike in limit pricing (\cite{milgrom+roberts1982b} and the literature following) where the incumbent deters potential entrants from entering the market by signalling its low cost via a low price. 

Bertrand competition has been combined with price signalling in \cite{janssen+roy2015}, where consumers are homogeneous and firms may verifiably disclose their types. 
In \cite{sengupta2015}, consumers may or may not value quality, but are otherwise homogeneous. Bertrand competing firms publicly invest to obtain a random private quality improvement, and signal quality via price. 

\cite{hertzendorf+overgaard2001} assume that one firm has high and the other low quality (firm types are perfectly negatively correlated, thus firms know each other's type) and that cost does not depend on quality. Firms may signal via price or advertising. Full separation requires advertising. 

\cite{hertzendorf+overgaard2001b,daughety+reinganum2007,daughety+reinganum2008b} consider Hotelling competition with quality differences (thus both horizontal and vertical differentiation). \cite{daughety+reinganum2007,daughety+reinganum2008b} focus on symmetric separating equilibria. \cite{hertzendorf+overgaard2001b} show the nonexistence of full separation, similarly to the current work. 


If firm types only differ in their private marginal cost, but not quality, then the high cost type prices at its marginal cost, but the low cost type mixes over a range of prices strictly above its marginal cost and weakly below the price of the high-cost type \citep{spulber1995}. 

The next section sets up the model and Section~\ref{sec:results} characterises the equilibrium set, first when cost and quality are positively associated. Negative correlation is examined in Section~\ref{sec:negcorr}. 

\section{Model}
\label{sec:model}

Two firms indexed by $i\in\set{\x,\y}$ compete. Each draws an i.i.d.\ type $\theta\in\set{\good,\bad}$, interpreted respectively as good and bad, with $\Pr(\good)=\mu_0\in(0,1)$. A continuum of consumers of mass $1$, indexed by $v\in[0,\overline{v}]$, is distributed according to the strictly positive continuous pdf $f_v$, with cdf $F_v$. Consumer types are independent of firm types. 
All players know their own type, but only have a common prior belief over the types of other players. 

A type $\theta$ firm has marginal cost $c_{\theta}$, with $c_{\good}>c_{\bad}>0$. A type $\good$ firm has better quality: a type $v$ consumer values a type $\bad$ firm's product at $v$ and $\good$'s product at $h(v)\geq v$, with $h'\geq 1$ and $h(\overline{v})>\overline{v}$. For some results, $h$ is restricted to the form $h(v)=v+\nu$, with $\nu>0$. In this case, all consumers are willing to pay the same premium for quality. 
Assume $\overline{v}>c_{\good}\geq h(0)$, so demand for the good type firm is positive. The previous assumption $c_{\bad}>0$ ensures that not all consumers buy from $\bad$ under complete information. 
Firms and consumers are risk-neutral. Consumers have unit demand. 

The firms observe their types and simultaneously set prices $P_{\x},P_{\y}\in [0,P_{\max}]$, where $P_{\max}\in(h(\overline{v}),\infty)$. 
A behavioural strategy of firm $i$ maps its type to $\Delta \mathbb{R}_{+}$.\footnote{Denote by $\Delta S$ the set of probability distributions on the set $S$. 
} 
The probability that firm $i$'s type $\theta$ assigns to prices below $P$ is denoted $\sigma_i^{\theta}(P)$, so $\sigma_i^{\theta}(\cdot)$ is the cdf of price. 
The corresponding pdf is denoted $\frac{d\sigma_i^{\theta}(P)}{dP}$ if it exists.

After seeing the prices the firms, a consumer decides whether to buy from firm $\x$ (denoted $b_{\x}$), firm $\y$ ($b_{\y}$) or not at all ($n$). 
The behavioural strategy $\sigma:[0,\overline{v}]\times \mathbb{R}_{+}^2\rightarrow \Delta\set{b_{\x},b_{\y},n}$ of a consumer maps his valuation and the prices to a decision. 

The \emph{ex post} payoff of a type $\theta$ firm if it sets price $P$ and a mass $D$ of consumers buy from it is $(P-c_{\theta})D$. 
Define $v_{\mu}(P):=\inf\{v:\mu h(v)+(1-\mu)v\geq P\}$. In particular, $v_{1}(\cdot)=h^{-1}(\cdot)$ and $v_{0}(\cdot)=id(\cdot)$. Total demand at price $P$ and a fixed posterior belief $\mu$ of the consumers is $D^{\mu}(P)=1-F_{v}(v_{\mu}(P))$. 
If $h(v)=v+\nu$, then $D^{\mu}(P)=1-F_{v}(P-\mu \nu)$. 
The monopoly profit of type $\theta$ at $P,\mu$ is $\pi_{\theta,\mu}^{m}(P) =(P-c_{\theta})D^{\mu}(P)$. 
The complete-information monopoly profit $\pi_{\bad,0}^{m}(P)$ is denoted $\pi_{\bad}^{m}(P)$, and $\pi_{\good,1}^{m}(P)$ denoted $\pi_{\good}^{m}(P)$. The monopoly price is $P_{\theta}^{m}:=\arg\max_{P}\pi_{\theta}^{m}(P)$. 
Assume that $\pi_{\theta}^{m}(P)$ is single-peaked in $P$. 
To avoid trivial separation, assume that $P_{\bad}^{m}\geq c_{\good}$ or $\pi_{\bad}^{m} <(c_{\good}-c_{\bad})D^{1}(c_{\good})$. 

A consumer's belief about firm $i$ conditional on price $P$ and the firm's trategy $\sigma_i^*$ is  
\begin{align}
\label{mu}
\mu_i(P) :=\frac{\mu_0\frac{d}{dP}\sigma_i^{\good*}(P)}{\mu_0\frac{d}{dP}\sigma_i^{\good*}(P) +(1-\mu_0)\frac{d}{dP}\sigma_i^{\bad*}(P)} 
\end{align}
whenever $\mu_0\frac{d}{dP}\sigma_i^{\good*}(P) +(1-\mu_0)\frac{d}{dP}\sigma_i^{\bad*}(P)>0$. 
A discontinuity of height $h_{\theta}$ in the cdf $\sigma_i^{\theta*}$ is interpreted in the pdf as a Dirac $\delta$ function times $h_{\theta}$, thus makes the denominator of~(\ref{mu}) positive. 
A jump in $\sigma_i^{\good*}(\cdot)$, but not $\sigma_i^{\bad*}(\cdot)$ at $P$ yields $\mu_i(P)=1$, and a jump in $\sigma_i^{\bad*}(\cdot)$, but not $\sigma_i^{\good*}(\cdot)$ results in $\mu_i(P)=0$. If each $\sigma_{i}^{\theta *}$ has an atom of respective size $h_{\theta}$ at $P$, then $\mu_i(P)=\frac{\mu_0h_{\good}}{\mu_0h_{\good}+(1-\mu_{0})h_{\bad}}$. 
Finally, if the denominator of~(\ref{mu}) vanishes, then choose an arbitrary belief. 

The equilibrium notion is perfect Bayesian equilibrium (PBE), henceforth simply called equilibrium: each player maximises its payoff given its belief about the strategies of the others, and the beliefs are derived from Bayes' rule when possible. 

The minimal and maximal price of firm $i$'s type $\theta$ in a candidate equilibrium are denoted $\underline{P}_{i\theta}$ and $\overline{P}_{i\theta}$, respectively. 


\section{Results}
\label{sec:results}

First the benchmark of complete information is considered, which illustrates some general features of the framework. After that, the case of private cost and quality is examined. 

One general observation is that among the consumers who end up buying, the ones with a low valuation for good quality relative to bad sort to firms believed to have lower quality, while the high valuation consumers go to expected high quality. If the quality difference between the firm types is large compared to their cost difference and firms draw unequal types, then the low quality firm is left with zero demand. Similarly, if firm types differ and the variation in quality is small and in cost large, then the high quality firm receives no customers.

\subsection{Benchmark: complete information} 
\label{sec:completeinfo}
Symmetric firms with publicly known qualities price at their marginal cost, regardless of whether consumers are homogeneous or not and whether their quality premium $h(v)-v$ is constant or increasing. 

With asymmetric firms and a constant quality premium, one type prices at its marginal cost and the other higher by just enough to make the consumers indifferent. For example, if $c_{\good}-c_{\bad}>\nu$, then $P_{\good}=c_{\good}$ and $P_{\bad}=c_{\good}-\nu$, but if $c_{\good}-c_{\bad}<\nu$, then $P_{\good}=c_{\bad}+\nu$ and $P_{\bad}=c_{\bad}$. Given the chance, all consumers leave the type with relatively higher cost for its quality (if $c_{\good}-c_{\bad}>\nu$, then $\good$, otherwise $\bad$), because otherwise the type with the lower relative cost would undercut slightly. 
The outcome of one type pricing at its marginal cost and receiving zero demand is standard in asymmetric Bertrand competition. 

In asymmetric price competition when the quality premium increases in consumer valuation, define the indifferent consumer as $v^{*}(P_{\good}-P_{\bad}):=\inf\{v\geq 0: h(v)-v\geq P_{\good}-P_{\bad}\}$. Assume $h'>1$, which ensures $v^{*}$ is a strictly increasing continuous function. 
Consumers with valuations above $v^{*}(P_{\good}-P_{\bad})$ buy from $\good$, those below from $\bad$ or not at all---the latter if $v<v_0(P_{\good},P_{\bad}):=\min\set{P_{\bad},h^{-1}(P_{\good})}$. 
A good type firm who expects $P_{\bad}^*$ solves $\max_{P_{\good}}(P_{\good}-c_{\good})[1-F_{v}(v^*(P_{\good}-P_{\bad}^*))]$ and a bad type solves $\max_{P_{\bad}}(P_{\bad}-c_{\bad})[F_{v}(v^*(P_{\good}^*-P_{\bad})) -F_{v}(P_{\bad})]$. 
Equilibrium exists, because prices may w.l.o.g.\ be restricted to the convex compact interval $[0,h(\overline{v})]$ and payoffs are continuous in the action profiles. 

If $h(c_{\bad})-c_{\good}< 0=c_{\bad}-c_{\bad}$, then type $\bad$ obtains positive demand even if $\good$ prices at its marginal cost $c_{\good}$. Symmetrically, if $h(\overline{v})-c_{\good}> \overline{v}-c_{\bad}$, then $\good$ can attract some customers facing $P_{\bad}=c_{\bad}$. Under these conditions, each type prices above its marginal cost and obtains positive demand, unlike with homogeneous consumers or a constant quality premium.

\subsection{Incomplete information Bertrand competition} 
\label{sec:incompleteinfo}
Results for the general case of an increasing quality premium for the good type are presented first, followed in Section~\ref{sec:const} by derivations that require the additional restriction of a constant quality premium. 
The first lemma proves that in any equilibrium, if demand for a firm's bad type is positive, then  the bad type prices below the good and obtains greater demand, strictly so under (partial) separation. The proof combines the incentive constraints (ICs) of the types and is standard. 
\begin{lemma}
\label{lem:monot}
In any equilibrium, $D^{\mu_{i}(P_{i\bad})}(P_{i\bad})\geq D^{\mu_{i}(P_{i\good})}(P_{i\good})$, and if $D^{\mu_{i}(P_{i\bad})}(P_{i\bad}) >0$, then $P_{i\good}\geq P_{i\bad}$, with equality iff $ D^{\mu_{i}(P_{i\bad})}(P_{i\bad})= D^{\mu_{i}(P_{i\good})}(P_{i\good})$. If $D^{\mu_{i}(P_{i\theta})}(P_{i\theta}) >0$, then $P_{i\theta}\geq c_{\theta}$. 
\end{lemma}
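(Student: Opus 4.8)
The plan is to derive all three claims from the two incentive constraints of firm $i$, using the fact that a consumer's posterior, and hence realised demand, depends only on the observed price and not on the firm's true type. Fix a price $P_{i\good}$ in the support of $\sigma_i^{\good*}$ and a price $P_{i\bad}$ in the support of $\sigma_i^{\bad*}$, and abbreviate $D_\good:=D^{\mu_{i}(P_{i\good})}(P_{i\good})$ and $D_\bad:=D^{\mu_{i}(P_{i\bad})}(P_{i\bad})$. Because the belief attached to any price is the same whichever type charges it, a type-$\theta$ firm that mimics the other type's price obtains exactly that type's demand, at a margin computed with its own cost $c_\theta$. Optimality in equilibrium therefore yields $(P_{i\good}-c_\good)D_\good\geq (P_{i\bad}-c_\good)D_\bad$ and $(P_{i\bad}-c_\bad)D_\bad\geq (P_{i\good}-c_\bad)D_\good$.

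For the demand ranking I would simply add the two inequalities: the price--revenue terms cancel, leaving $(c_\good-c_\bad)(D_\bad-D_\good)\geq 0$, so $c_\good>c_\bad$ forces $D_\bad\geq D_\good$. The individual-rationality bound is equally short: by charging $P_{\max}>h(\overline{v})$ a firm faces zero demand and hence zero profit under any belief, so each type's equilibrium profit is nonnegative; when $D^{\mu_{i}(P_{i\theta})}(P_{i\theta})>0$ this forces $P_{i\theta}\geq c_\theta$.

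The price ranking is the step that needs care, since the constraints compare profits and demands, not prices. Assuming $D_\bad>0$, I would argue by contradiction from $P_{i\good}<P_{i\bad}$, splitting on $D_\good$. If $D_\good>0$, the margin bound just proved gives $P_{i\good}\geq c_\good$, so $P_{i\bad}>P_{i\good}\geq c_\good$; together with $D_\bad\geq D_\good>0$ this makes the right-hand side $(P_{i\bad}-c_\good)D_\bad$ of the good type's constraint strictly exceed its left-hand side $(P_{i\good}-c_\good)D_\good$, a contradiction. If instead $D_\good=0$, the good type's constraint collapses to $P_{i\bad}\leq c_\good$, while zero demand can occur only at a price no smaller than the top consumer's willingness to pay, giving $P_{i\good}\geq \overline{v}>c_\good\geq P_{i\bad}$, again contradicting $P_{i\good}<P_{i\bad}$. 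Hence $P_{i\good}\geq P_{i\bad}$ whenever $D_\bad>0$.

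For the equivalence I would note that equal prices force equal beliefs and therefore equal demands, giving one direction; for the converse, combining the two constraints as $c_\bad(D_\bad-D_\good)\leq P_{i\bad}D_\bad-P_{i\good}D_\good\leq c_\good(D_\bad-D_\good)$ shows that $D_\bad=D_\good$ reduces to $(P_{i\bad}-P_{i\good})D_\bad=0$, whence $P_{i\good}=P_{i\bad}$ since $D_\bad>0$. The main obstacle is precisely this translation from profit and demand inequalities into a statement about prices: the incentive constraints alone do not order prices, and one must feed in either the margin bound $P_{i\good}\geq c_\good$ when the good type sells, or the demand primitive that zero quantity requires a price above every consumer's valuation when it does not.
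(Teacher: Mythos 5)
Your proof is correct and follows essentially the same route as the paper: both arguments rest on the two incentive constraints between the types plus the nonnegative-profit bound, with your addition of the ICs giving the demand ranking a touch more cleanly than the paper's division by $c_{\theta}$. You are in fact more explicit than the paper on the price-ranking step (in particular the $D_{\good}=0$ case, where zero demand forces $P_{i\good}\geq\overline{v}>c_{\good}\geq P_{i\bad}$), which the paper leaves largely implicit.
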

\begin{proof}
Denote $D^{\mu_{i}(P_{i\theta})}(P_{i\theta})$ by $D_{\theta}$ to simplify notation. 
In any equilibrium, for any $P_{i\theta}$ in the support of $\sigma_{i}^{\theta*}$, the incentive constraints IC$_{\good}$: $(P_{i\good}-c_{\good})D_{\good} \geq (P_{i\bad}-c_{\good})D_{\bad}$ and IC$_{\bad}$: $(P_{i\bad}-c_{\bad})D_{\bad}\geq (P_{i\good}-c_{\bad})D_{\good}$ hold. 
Rewrite the ICs as $P_{i\good}D_{\good}-P_{i\bad}D_{\bad} \geq c_{\good}D_{\good}-c_{\good}D_{\bad}$ and $c_{\bad}D_{\good}-c_{\bad}D_{\bad}\geq P_{i\good}D_{\good}-P_{i\bad}D_{\bad}$. Using $c_{\theta}>0$, the ICs become $\frac{P_{i\good}D_{\good}-P_{i\bad}D_{\bad}}{c_{\good}} \geq D_{\good}-D_{\bad}\geq \frac{P_{i\good}D_{\good}-P_{i\bad}D_{\bad}}{c_{\bad}}$. Then from $c_{\good}>c_{\bad}$, we get $D_{\good}\leq D_{\bad}$ and $P_{i\good}D_{\good}\leq P_{i\bad}D_{\bad}$. 

In any equilibrium, if $D_{\theta} >0$, then $P_{i\theta}\geq c_{\theta}$. Thus if $D_{\good} >0$, then $D_{\bad} >0$, otherwise $\bad$ deviates to $P_{i\good}\geq c_{\good}>c_{\bad}$ to get positive profit. 
If $D_{\bad}= D_{\good}>0$, then the ICs imply $P_{i\good}= P_{i\bad}$. The converse implication is obvious. 
%
%
\end{proof}
An implication of Lemma~\ref{lem:monot} is that the supports of $\sigma_{i}^{\bad*}$ and $\sigma_{i}^{\good*}$ have at most one price in common. Thus if one type at a firm semipools, i.e.\ only sets prices that the other type also chooses, then the semipooling type plays a pure strategy. 

The next lemma rules out some equilibria even when belief threats are possible. 
\begin{lemma}
\label{lem:withbeliefthreats}
In any equilibrium, at least one firm's type $\bad$ obtains positive demand. If both types of firm $i$ get zero demand, then both types of $j$ set $P_{j}>c_{\good}$ and receive positive profit. 
If $\mu_0 h(c_{\bad})+(1-\mu_0)c_{\bad}-c_{\good}\leq 0$ or type $\bad$ of a profitable firm partly separates or firms play symmetric strategies, then both firms' $\bad$ types obtain positive profit. 
\end{lemma}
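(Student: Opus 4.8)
The plan is to handle the three assertions in sequence, in each case ruling out the bad configuration by exhibiting a profitable deviation for a bad type priced just above $c_{\bad}$ under the least favourable off-path belief $\mu=0$; this is precisely why the conclusions survive arbitrary belief threats. For the first assertion, suppose toward a contradiction that the bad types of both $\x$ and $\y$ get zero demand. Since a firm's bad-type demand is an expectation over the rival's realised type and price, zero expected demand forces zero demand on almost every realisation, in particular on the positive-probability event that both firms are bad; hence on that event no consumer buys, so every consumer's surplus from each bad type is nonpositive. Now let a bad type deviate to $c_{\bad}+\epsilon$. Under the worst admissible belief $\mu_i(c_{\bad}+\epsilon)=0$ it offers consumer $v$ surplus $v-c_{\bad}-\epsilon$, positive for all $v>c_{\bad}+\epsilon$; since $\overline{v}>c_{\good}>c_{\bad}$ this is a set of positive mass, and on the event the rival is bad those consumers strictly prefer the deviator. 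The deviation earns strictly positive expected profit, a contradiction, and a higher on-path belief only strengthens it.

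For the second assertion, assume both types of $\x$ get zero demand. The first assertion together with $D^{\mu_\x(P_{\x\bad})}(P_{\x\bad})=0$ forces $D^{\mu_\y(P_{\y\bad})}(P_{\y\bad})>0$, so by Lemma~\ref{lem:monot} $P_{\y\bad}\geq c_{\bad}$ and $\y$ serves the whole buying side. I would first upgrade weak to strict profit: if $\y$'s bad (resp.\ good) type priced exactly at its own marginal cost while shutting out $\x$, then, $\x$'s equilibrium offers being strictly beaten for every buyer, $\y$ could raise that price slightly and retain positive demand at a positive margin against $\x$'s weakly higher offers, contradicting optimality; hence both $\y$-prices strictly exceed their own marginal costs. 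To obtain the sharper $P_\y>c_{\good}$ I would use that a shut-out $\x$ can always threaten to re-enter: if some $\y$-price lay in $(c_{\bad},c_{\good}]$, then either $\x$'s bad type priced just above $c_{\bad}$ and believed bad, or $\x$'s good type, could profitably capture a consumer segment against $\y$'s realised type, since $\y$'s perceived quality premium at belief $\mu_\y$ caps the surplus it can offer; ruling this out for every segment forces every $\y$-price above $c_{\good}$, and the residual monopoly problem for $\y$ yields strictly positive profit for each of its types.

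For the third assertion I would treat the three sufficient conditions separately. Under symmetric strategies the first assertion gives one bad type positive demand, hence by symmetry both do, and a standard argument (a bad type at $c_{\bad}$ with positive demand could raise its price slightly and retain demand against the rival's weakly higher offers) promotes this to strict profit. If the bad type of a profitable firm partly separates, then by Lemma~\ref{lem:monot}, whose implication is that the two types' supports share at most one price, that type sets prices carrying belief $0$ with positive demand, and optimality of a profitable firm forces such a price strictly above $c_{\bad}$, so both bad types profit via the first assertion and Lemma~\ref{lem:monot}. The inequality $\mu_0 h(c_{\bad})+(1-\mu_0)c_{\bad}-c_{\good}\leq 0$ is exactly the statement that a bad rival, even believed at the prior, cannot give the top consumer enough surplus to block an entrant pricing near $c_{\bad}$ while itself pricing above $c_{\good}$; I would use it to rule out the shut-out configuration of the second assertion, so that neither firm is shut out and each bad type sells strictly above $c_{\bad}$.

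The main obstacle is the second assertion and, through it, the inequality condition of the third: one must simultaneously control the incentive constraints of both of $\x$'s types, the rival's random type, and the freedom in off-path beliefs. The delicate point is that the adversarial belief $\mu_\x=0$ strips $\x$'s good type of its quality premium and so weakens its re-entry, meaning the bound $P_\y>c_{\good}$ cannot come from the good type's deviation alone; it must be extracted from the bad-type blocking inequality together with the cap that $\y$'s own perceived quality premium places on the surplus it can offer. This is exactly where the quantity $\mu_0 h(c_{\bad})+(1-\mu_0)c_{\bad}$ enters and must be tracked carefully across consumer segments.
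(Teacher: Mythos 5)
Your argument for the first assertion is valid and takes a genuinely different route from the paper's: you condition on the positive-probability event that both firms draw type $\bad$, observe that zero expected demand forces zero demand there, hence both bad types' equilibrium price--belief pairs give every consumer nonpositive perceived surplus, and then a deviation to $c_{\bad}+\epsilon$ at the worst belief $\mu=0$ captures the segment $v>c_{\bad}+\epsilon$ whenever the rival is bad. The paper instead starts from the observation that some type of some firm must have positive demand at a price at least $c_{\good}$, and lets that firm's type $\bad$ imitate its type $\good$; both arguments work. Your undercutting argument for the partial-separation case also essentially matches the paper's.

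The rest has genuine gaps. First, your recurring move ``raise the price slightly and retain positive demand against the rival's weakly higher offers'' is precisely what belief threats destroy: the raised price is off-path, so the belief there can be set to $0$, and a rival pooling at a nearby price carrying belief $\mu_0$ then offers every consumer an extra perceived premium $\mu_0[h(v)-v]$ that the deviator has just forfeited, so the deviation can yield zero demand. This invalidates your derivation that the surviving firm's prices exceed marginal cost in the second assertion and your symmetric-strategies argument in the third. The paper's fix in the symmetric case is imitation along the equilibrium path: type $\good$ receives positive demand at some on-path $P_{\good}\geq c_{\good}$, and type $\bad$ imitating that price inherits the same demand at cost $c_{\bad}<c_{\good}$, with no off-path belief involved; in the shut-out case the paper instead lets the surviving firm deviate as an effective monopolist to any $P\in(c_{\good},\overline{v})$. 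Second, your plan to obtain $P_{j}>c_{\good}$ and to deploy the inequality condition via re-entry threats by the shut-out firm cannot work: in the asymmetric-pooling equilibria that the second assertion describes (exhibited right after Lemma~\ref{lem:withbeliefthreats}), the shut-out firm has \emph{no} profitable re-entry at any belief precisely because $\mu_0 h(c_{\bad})+(1-\mu_0)c_{\bad}>c_{\good}$, so no bound on the rival's price can be extracted from such threats. Moreover the inequality $\mu_0 h(c_{\bad})+(1-\mu_0)c_{\bad}-c_{\good}\leq 0$ concerns the consumers with valuations just above $c_{\bad}$, not the top consumer: when it holds, those low-valuation consumers get nonpositive perceived surplus from a rival that (semi)pools at a price above $c_{\good}$ with belief at most $\mu_0$, so the other firm's type $\bad$ can poach exactly that segment at a price in $(c_{\bad},v)$ and earn positive profit. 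Since $h'\geq 1$ makes the perceived quality premium largest at the top, the top consumers are the hardest to poach, so your reading is backwards, and ruling out the shut-out configuration would in any case not by itself deliver positive profit for both bad types.
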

\begin{proof}
Suppose both types of both firms get zero demand in equilibrium. Then $P>\overline{v}$ for all prices. Both types deviate to $P\in(c_{\good},\overline{v})$ to obtain positive demand and profit even at the worst belief $\mu_{i}(P)=0$. 
If firm $i$'s type $\good$ obtains positive demand, then any $P_{i\good}$ that $\good$ sets is above $c_{\good}$, otherwise $\good$ would deviate to $P=c_{\good}$. Demand and profit are positive for $i\bad$, otherwise $i\bad$ would deviate to $P_{i\good}$. 

Suppose both types of firm $j$ receive zero demand. Total demand at $P<\overline{v}$ is positive at any belief, so both types of $i$ obtain positive demand and profit at any $P_{i}\in(c_{\good},\overline{v})$. Positive $\pi_{i\theta}^{*}$ implies $P_{i\theta}>c_{\theta}$. If $i\bad$ partly separates, i.e.\ sets $P_{i\bad}$ with $\mu_{i}(P_{i\bad})=0$, then $j\bad$ can get positive demand and profit by setting $P_{j}\in(c_{\bad},P_{i\bad})$, regardless of $\mu_{j}(P_{j})$. 

If $i\bad$ (semi)pools with $i\good$, i.e.\ only sets prices that type $\good$ also sets, 
then there exists $P_{i0}$ in the support of $\sigma_{i\good}$ s.t.\ $\mu_{i}(P_{i0})\leq \mu_{0}$ and $P_{i0}>c_{\good}$. 
If $\mu_0 h(c_{\bad})+(1-\mu_0)c_{\bad}-c_{\good}\leq 0$, then for $\epsilon>0$ small, consumers with valuations $v\in(c_{\bad},c_{\bad}+\epsilon)$ strictly prefer to buy from $j$ at $P_{j}\in(c_{\bad},v)$ and any belief, rather than from $i$ at $P_{i0}>c_{\good}$ and $\mu_{i}(P_{i0})\leq \mu_{0}$. This makes $j\bad$ deviate to $P_{j}\in(c_{\bad},v)$ to get positive profit. 

Symmetric strategies imply that the firms split the total demand on average. Each firm's type $\good$ receives positive demand with probability at least $\mu_0$ (when the other firm has type $\good$). Then $P_{\good}\geq c_{\good}$. Type $\bad$ can imitate $P_{\good}$, thus $\pi_{i\bad}^{*}>0$. 
\end{proof}
The results of Lemma~\ref{lem:withbeliefthreats} are tight, in the sense that there exist equilibria where one firm obtains zero demand, or both firms' good types zero profit. 
An equilibrium in which the good types receive zero profit is symmetric pooling on $P_{0}=c_{\good}$, which exists if $\mu_0 h(c_{\bad})+(1-\mu_0)c_{\bad}> c_{\good}$ and only if a weak inequality holds. 

Equilibria where one firm obtains zero demand require $\mu_0 h(c_{\bad})+(1-\mu_0)c_{\bad}>c_{\good}$ and asymmetric strategies. For example, firm $\x$ pools on some $P_{\x0}\in[c_{\good},\mu_0 h(c_{\bad})+(1-\mu_0)c_{\bad})$ and firm $\y$ on some $P_{\y0}\in(P_{\x0},P_{\x0}+\epsilon)$ for $\epsilon>0$ small. 
Belief at $P_{i0}$, $i\in\set{\x,\y}$ is $\mu_{i}(P_{i0})=\mu_0$, at other prices zero. Firm $\y$ gets zero demand on or off the equilibrium path from prices $P\geq c_{\bad}$, because if $\mu_0 h(c_{\bad})+(1-\mu_0)c_{\bad}-P_{\x0}>c_{\bad}-c_{\bad}=0$, then all consumers with $v\geq c_{\bad}$ prefer to buy at $P_{\x0}$ and $\mu_{\x}(P_{\x0})=\mu_0$, rather than at $P=c_{\bad}$ and $\mu_{\x}(P)=0$. 
Neither firm wants to deviate, because $P\neq P_{i0}$ results in the worst belief and zero demand. 
Weakly dominated strategies are not played in this equilibrium, provided $P_{\x0}>c_{\good}$. 

Some non-pooling equilibria also involve zero demand for one firm---modify the preceding example so that each type of firm $\y$ (partly) separates, for example $\bad$ sets $P_{\y\bad}>\overline{v}$ and $\good$ sets $P_{\y\good}>h(\overline{v})$. 
To ensure zero demand for $\y$, any $P_{\y\theta}$ receiving positive probability must satisfy 
$P_{\y\theta}-P_{i0}\geq [\mu_{\y}(P_{\y\theta})-\mu_0][h(v)-v]$ for all $v$, which holds if $P_{\y\theta}-[\mu_{\y}(P_{\y\theta})-\mu_0][h(\overline{v})-\overline{v}]\geq P_{i0}$ for any $\mu_{\y}(P_{\y\theta})\geq \mu_0$.\footnote{
The symmetric requirement that if $\mu_{\y}(P_{\y\theta})\leq \mu_0$, then $P_{\y\theta}+[\mu_0-\mu_{\y}(P_{\y\theta})][h(c_{\bad})-c_{\bad}]\geq P_{i0}$ holds whenever $\mu_0 h(c_{\bad})+(1-\mu_0)c_{\bad}>c_{\good}$, which by Lemma~\ref{lem:withbeliefthreats} is necessary for one firm to obtain zero demand. 
} 
An additional necessary condition for equilibria with $P_{\y\bad}>\overline{v}$ and $P_{\y\good}>h(\overline{v})$ is that firm $\x$ as a monopolist does not deviate: $(P-c_{\theta})D^{0}(P)\leq (P_{\x0}-c_{\theta})D^{\mu_{0}}(P_{\x0})$ for any $\theta$ and $P<P_{\y\bad}$. 
This condition is easier to satisfy for smaller $P_{\y\theta}$. Reducing $P_{\y\theta}$ makes the requirement $P_{\y\theta}-P_{i0}\geq [\mu_{\y}(P_{\y\theta})-\mu_0][h(\overline{v})-\overline{v}]$ harder to satisfy. 

Asymmetric pooling passes the Intuitive Criterion under a constant quality premium for some parameter values, as shown in the following lemma. 
\begin{lemma}
\label{lem:asypool}
If $h(v)-v=\nu$ for all $v$, then the Intuitive Criterion does not eliminate equilibria in which firm $i$ pools on some $P_{i0}\in(c_{\good},c_{\bad}+\mu_0 \nu)$ and firm $j$ on some $P_{j0}>P_{i0}$ such that $[P_{j0}+(1-\mu_0)\nu-c_{\bad}][1-F_{v}(P_{j0}-\mu_0\nu)] >(P_{i0}-c_{\bad})[1-F_{v}(P_{i0}-\mu_0\nu)]$. 
\end{lemma}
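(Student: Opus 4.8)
The plan is to verify the Cho--Kreps Intuitive Criterion directly. Since $h(v)-v\equiv\nu$, the surplus comparison between any two offers carrying different beliefs is independent of $v$: a consumer prefers firm $i$'s offer $(P_{i0},\mu_0)$ to firm $j$'s offer $(P_{j0},\mu_0)$ iff $P_{i0}\le P_{j0}$, so with $P_{j0}>P_{i0}$ firm $i$ serves every participating consumer while firm $j$ gets zero demand and zero profit. Belief is $\mu_0$ at each pooling price and $0$ off path, and this is an equilibrium by hypothesis, so no type gains from any deviation when the off-path belief is $0$. The only way the criterion can overturn the equilibrium is by forcing the off-path belief at some price $P'$ strictly above $\mu_0$ (toward $1$) while some type strictly gains there. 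So for every off-path $P'$ I would identify the set $T(P')$ of types for which $P'$ is \emph{not} equilibrium-dominated, and show that wherever beliefs could be pushed to favour the good type, the good type does not strictly gain.

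First I would dispose of firm $j$. Because firm $j$ earns zero in the candidate, any profitable deviation needs a belief above $\mu_0$; but at a common deviation price and belief the two types of $j$ face the identical demand $D^{1}(P')$ and the bad type has the lower cost, so $\bad$ gains at least as much as $\good$. Hence whenever $\good$ could gain, $\bad$ is not equilibrium-dominated, so $T(P')\ni\bad$ and the worst belief $\mu(P')=0$ stays admissible; at $\mu=0$ firm $j$ attracts a consumer only at a price at most $P_{i0}-\mu_0\nu<c_{\bad}$, so no type of $j$ gains. Firm $j$ therefore never breaks the criterion, and the binding case is firm $i$'s good type separating upward.

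For firm $i$, write the gain from deviating to $P'$ under belief $1$ (valid while $P'\le\widehat P:=P_{j0}+(1-\mu_0)\nu$, the largest price at which $i$ still beats $j$'s offer and captures the whole demand $D^{1}(P')$) as $G_{\theta}(P'):=(P'-c_{\theta})D^{1}(P')-(P_{i0}-c_{\theta})D^{\mu_0}(P_{i0})$. A direct computation gives $G_{\bad}(P')-G_{\good}(P')=(c_{\good}-c_{\bad})[D^{1}(P')-D^{\mu_0}(P_{i0})]$, which is positive iff $P'<P_{i0}+(1-\mu_0)\nu$. Thus the good type gains strictly more than the bad only on $(P_{i0}+(1-\mu_0)\nu,\widehat P]$, which is the sole range where the criterion could force belief $1$ while $\good$ profits. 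On this interval I would show that the bad type is not equilibrium-dominated, i.e.\ $(P'-c_{\bad})D^{1}(P')\ge(P_{i0}-c_{\bad})D^{\mu_0}(P_{i0})$. At the left endpoint $D^{1}(P_{i0}+(1-\mu_0)\nu)=D^{\mu_0}(P_{i0})$, so the inequality is strict for free; at the right endpoint $D^{1}(\widehat P)=D^{\mu_0}(P_{j0})$, so it reduces to exactly the hypothesis $[P_{j0}+(1-\mu_0)\nu-c_{\bad}][1-F_{v}(P_{j0}-\mu_0\nu)]>(P_{i0}-c_{\bad})[1-F_{v}(P_{i0}-\mu_0\nu)]$. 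Since $\pi_{\bad,1}^{m}(P)=(P-c_{\bad})D^{1}(P)$ is single-peaked, its minimum over the interval is attained at an endpoint, so the inequality propagates across the whole range.

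It then follows that $\bad$ is not equilibrium-dominated anywhere the good type gains more, so $T(P')\ni\bad$ there, $\mu(P')=0$ is admissible, and the equilibrium conditions (deviations unprofitable at belief $0$) deter both types. On the complementary ranges there is nothing left to check: for $P'\le P_{i0}+(1-\mu_0)\nu$ the good type gains no more than the bad, so either $\bad$ is dominated and then $\good$ is too (no gain at the forced belief $1$), or $\bad$ is not dominated and $\mu=0$ is admissible; and for $P'>\widehat P$ firm $i$ captures no demand even at belief $1$, so $\good$ cannot gain. Hence no off-path price forces a belief enabling a profitable deviation, and the asymmetric pooling survives. I expect the main obstacle to be precisely the interval step: the two endpoint inequalities (one automatic, one the hypothesis) must be upgraded to the entire range $(P_{i0}+(1-\mu_0)\nu,\widehat P]$, which is where single-peakedness of $\pi_{\bad,1}^{m}$ is indispensable, and I would verify that this single-peakedness is inherited from the standing assumption on the monopoly profits.
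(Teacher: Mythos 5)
Your proof is correct and takes essentially the same route as the paper's: for every off-path price at which type $\good$ would strictly gain under belief $1$, you show type $\bad$ is not equilibrium-dominated either (trivially for the zero-demand firm $j$, for downward deviations, and for $P'>P_{j0}+(1-\mu_0)\nu$; via the two endpoint inequalities on the critical interval $(P_{i0}+(1-\mu_0)\nu,\,P_{j0}+(1-\mu_0)\nu]$), so the belief $\mu=0$ remains admissible and the candidate survives. The one caveat is the step you yourself flag: the quasi-concavity you need is that of $\pi^{m}_{\bad,1}(P)=(P-c_{\bad})[1-F_{v}(P-\nu)]$, which is \emph{not} among the paper's standing assumptions (only $\pi^{m}_{\bad,0}$ and $\pi^{m}_{\good,1}$ are assumed single-peaked), and the paper itself closes this interior gap only by taking $P_{j0}-P_{i0}$ small and appealing to continuity of $F_{v}$ --- so you should either add that regularity condition explicitly or adopt the same smallness restriction.
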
 
\begin{proof}
Let $P_{\y0}>P_{\x0}$ w.l.o.g., so that $D_{\y}(P_{\y0})=0$. 
Asymmetric pooling equilibria exist only if $\mu_0 h(c_{\bad})+(1-\mu_0)c_{\bad}\geq c_{\good}$, by Lemma~\ref{lem:withbeliefthreats}. 
If setting belief to $\mu_{\y}(P_{\y})=1$ for some $P_{\y}$ results in $D_{\y}(P_{\y})>0$, then both types of firm $\y$ deviate, but if $D_{\y}(P_{\y})=0$, then neither type does. Thus for the firm with zero demand, the Intuitive Criterion has no effect. 

Consider the firm $\x$ with $D_{\x}(P_{\x0})>0$ and assume $h(v)-v=\nu$ for all $v$ (constant quality premium), in which case $\mu_0 h(c_{\bad})+(1-\mu_0)c_{\bad}\geq c_{\good}$ 
becomes $\mu_0 \nu\geq c_{\good}-c_{\bad}$. Define $P_{\x1}:=P_{\y0}+(1-\mu_0)\nu$, 
which is the price that makes all consumers indifferent between firms $\x$ and $\y$ at beliefs $\mu_{\x}=1$ and $\mu_{\y}=\mu_0$. Setting belief to $\mu_{\x}(P_{\x})=1$ for any $P_{\x}>P_{\x1}$ does not attract either type of firm $\x$ to deviate, because even at the best belief, all consumers switch to firm $\y$. 

If type $\good$ prefers $P<P_{\x0}$ to $P_{\x0}$, then so does type $\bad$. 

Setting belief to $\mu_{\x}(P_{\x})=1$ for any $P_{\x}\in(P_{\x0}, P_{\x1}]$ makes the bad type of $\x$ deviate to $P_{\x}$ if $(P_{\x1}-c_{\bad})[1-F_{v}(P_{\x1}-\nu)] >(P_{\x0}-c_{\bad})[1-F_{v}(P_{\x0}-\mu_0\nu)]$. This is equivalent to $(P_{\y0}+(1-\mu_0)\nu-c_{\bad})[1-F_{v}(P_{\y0}-\mu_0\nu)] >(P_{\x0}-c_{\bad})[1-F_{v}(P_{\x0}-\mu_0\nu)]$, which holds for $P_{\y0}-P_{\x0}$ small enough, because $F_{v}$ is continuous by assumption. 
Thus there is no price $P$ such that under the best belief, type $\good$ wants to deviate to $P$, but type $\bad$ prefers not to. 
\end{proof}
Symmetric pooling on $P_{0}\in[c_{\good},c_{\bad}+(1-\mu_0)\nu)$ exists if $c_{\good}<c_{\bad}+(1-\mu_0)\nu$ and passes the Intuitive Criterion by reasoning similar to Lemma~\ref{lem:asypool}. Take $P_{\x0}=P_{\y0}=P_{0}$. If the firms set the same price, then assume they split the market equally. The bad type deviates whenever the good type does if $\frac{1}{2}(P_{0}+(1-\mu_0)\nu-c_{\bad})[1-F_{v}(P_{0}-\mu_0\nu)] >(P_{0}-c_{\bad})[1-F_{v}(P_{0}-\mu_0\nu)]$, i.e.\ $(1-\mu_0)\nu>P_{0}-c_{\bad}$. 


In contrast to the constant quality premium case, the Intuitive Criterion eliminates all pooling equilibria when the quality premium strictly increases in the consumer's valuation. 
The idea of the proof relies on profit at a fixed belief being continuous when the quality premium is strictly increasing. At the best belief (certainty of the good type), the bad type prefers to deviate to a price just above pooling and prefers not to deviate to a high enough price, so by the Mean Value Theorem, there is a price at which the bad type is indifferent. At prices just above this indifference and the best belief, the good type still strictly prefers to deviate, which justifies the belief and eliminates the candidate equilibrium. 

\begin{lemma}
\label{lem:elimpool3}
If $h'(v)>1$, then the Intuitive Criterion rules out any equilibria where the support of $\sigma_{i\theta}^{*}$ includes $P_{i0}$ with $\mu_{i}(P_{i0})\in(0,1)$ and $D_{i}(P_{i0})>0$ and the support of $\sigma_{j\good}^{*}$ does not include $P_{j}$ s.t.\ $\mu_{j}(P_{j})=1$. 
\end{lemma}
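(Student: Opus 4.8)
The plan is to run a standard Intuitive Criterion argument built on a single-crossing comparison between the two types, the whole thing powered by the continuity of demand that $h'>1$ provides. First I would observe that $\mu_{i}(P_{i0})\in(0,1)$ forces, via~(\ref{mu}), both $\sigma_{i\good}^{*}$ and $\sigma_{i\bad}^{*}$ to place positive density at $P_{i0}$; by Lemma~\ref{lem:monot} the types share at most one price, so $P_{i0}$ is a genuine pooling price at which both types earn their equilibrium payoffs $\pi_{i\good}^{*}=(P_{i0}-c_{\good})D_{0}$ and $\pi_{i\bad}^{*}=(P_{i0}-c_{\bad})D_{0}$, where $D_{0}:=D_{i}(P_{i0})>0$. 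I would then study deviations by firm $i$ to a price $P$ carrying the most favourable belief $\mu_{i}(P)=1$, writing $\tilde{D}(P)$ for the demand this attracts given firm $j$'s equilibrium strategy, so the deviation profit of type $\theta$ is $\Pi_{\theta}(P)=(P-c_{\theta})\tilde{D}(P)$.

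The crux, and the step I expect to be the main obstacle, is to show that $\tilde{D}(\cdot)$ is continuous and strictly positive on $(P_{i0},h(\overline{v}))$; this is exactly where both hypotheses enter. Because the support of $\sigma_{j\good}^{*}$ contains no price with $\mu_{j}=1$, every price firm $j$ actually sets is perceived as quality strictly below $1$, so a consumer of type $v$ facing firm $i$ at belief $1$ and firm $j$ at belief $\mu_{j}<1$ gains an extra quality-adjusted surplus of $(1-\mu_{j})(h(v)-v)$ from firm $i$; since $h'>1$ makes $h(v)-v$ strictly increasing, this advantage grows in $v$ and eventually dominates any finite price gap, so firm $i$'s belief-$1$ deviation always retains the high-valuation consumers and $\tilde{D}(P)>0$ below $h(\overline{v})$. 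The same monotonicity moves the indifferent consumer $v^{*}$ continuously with $P$ for each realisation of $j$'s (possibly mixed) price, and integrating the bounded demand over $j$'s price distribution preserves continuity. This is precisely the property that fails under a constant premium, where demand is a step function and asymmetric pooling survives (Lemma~\ref{lem:asypool}).

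With continuity secured I would apply the Mean Value Theorem as in the sketch preceding the lemma. At $P_{i0}$ the belief jump from $\mu_{i}(P_{i0})<1$ to $1$ strictly raises demand, $\tilde{D}(P_{i0})>D_{0}$ (a positive mass of marginal consumers switches to the now higher-perceived-quality firm $i$, using $f_{v}>0$), so $\Pi_{\bad}(P_{i0})>\pi_{i\bad}^{*}$; at $P=h(\overline{v})$ demand vanishes and $\Pi_{\bad}=0<\pi_{i\bad}^{*}$. Continuity then yields $\hat{P}>P_{i0}$ with $\Pi_{\bad}(\hat{P})=\pi_{i\bad}^{*}$, i.e.\ $(\hat{P}-c_{\bad})\tilde{D}(\hat{P})=(P_{i0}-c_{\bad})D_{0}$; since $\hat{P}>P_{i0}$ this forces $\tilde{D}(\hat{P})<D_{0}$. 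Substituting into the good type's profit and using the bad type's indifference gives the single-crossing identity $\Pi_{\good}(\hat{P})-\pi_{i\good}^{*}=(c_{\good}-c_{\bad})\bigl(D_{0}-\tilde{D}(\hat{P})\bigr)>0$, so the higher-cost good type strictly gains from the very deviation at which the bad type is indifferent.

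Finally I would convert this into an Intuitive-Criterion contradiction. By continuity of $\Pi_{\bad}$ and $\Pi_{\good}$ I can choose $P^{\dagger}$ slightly above $\hat{P}$ with $\Pi_{\bad}(P^{\dagger})<\pi_{i\bad}^{*}$ and $\Pi_{\good}(P^{\dagger})>\pi_{i\good}^{*}$. Since $\Pi_{\bad}$ already uses the most favourable belief and deviation profit is increasing in belief, $P^{\dagger}$ leaves the bad type below $\pi_{i\bad}^{*}$ at every belief, so $P^{\dagger}$ is equilibrium-dominated for the bad type and the Intuitive Criterion assigns $\mu_{i}(P^{\dagger})=1$. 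The price $P^{\dagger}$ cannot lie on firm $i$'s path—the bad type shuns it, and the good type setting it would earn $\Pi_{\good}(P^{\dagger})>\pi_{i\good}^{*}$, contradicting equilibrium—so it is a genuine off-path deviation at which, under the forced belief $1$, the good type strictly prefers to move, eliminating the candidate equilibrium. I would close by noting the argument does not depend on which type $\theta$ was named in the hypothesis, since the interior belief already forces both types to pool at $P_{i0}$.
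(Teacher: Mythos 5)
Your proposal follows essentially the same route as the paper's proof: use $h'>1$ together with the hypothesis that $\sigma_{j\good}^{*}$ never carries belief $1$ to get continuity of the belief-$1$ deviation demand, note that the belief jump at the pooling price strictly raises demand, locate the bad type's indifference price by the intermediate value theorem, and run the same single-crossing computation $\Pi_{\good}(\hat{P})-\pi_{i\good}^{*}=(c_{\good}-c_{\bad})(D_{0}-\tilde{D}(\hat{P}))>0$ to justify belief $1$ just above it. One side claim is false, though harmless: $\tilde{D}(P)>0$ for all $P<h(\overline{v})$ does not hold, because the quality advantage $(1-\mu_{j})(h(v)-v)$ is bounded by $h(\overline{v})-\overline{v}$ and so cannot dominate an arbitrarily large price gap --- indeed the paper explicitly uses the existence of a price $P_{\x1}$ at which belief-$1$ demand is zero; your argument survives because the MVT only needs continuity and the endpoint value $\Pi_{\bad}(h(\overline{v}))=0$, and positivity of $\tilde{D}(\hat{P})$ follows automatically from the indifference equation.
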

\begin{proof}
Recall that $v_{\mu}(P):=\inf\{v:\mu h(v)+(1-\mu)v\geq P\}$. For any $P_{j}$, firm $i$'s demand 
\begin{align*}
&D^{\mu}_{i}(P) 
=\int_{v_{\mu}(P)}^{\overline{v}}\text{\textbf{1}}\set{\mu h(v)+(1-\mu)v-P >\mu_{j}(P_j)h(v)+(1-\mu_j(P_{j}))v-P_{j} } 
\\&\notag+\frac{1}{2}\text{\textbf{1}}\set{\mu h(v)+(1-\mu)v-P =\mu_{j}(P_j)h(v)+(1-\mu_j(P_{j}))v-P_{j} }dF_{v}(v)
\end{align*} 
decreases in $P$ and increases in $\mu$, strictly if $D^{\mu}_{i}(P)\in(0,1)$. 
Thus for any distribution $\mu_0\sigma_{j\good}^{*}+(1-\mu_0)\sigma_{j\bad}^{*}$ over $P_{j}$, firm $i$'s expected demand decreases in $P$ and increases in $\mu$, strictly if $\mathbb{E}D^{\mu}_{i}(P) \in(0,1)$. 

If $h'>1$, then $v_{\mu}(P)$ is continuous in $\mu,P$, and strictly increases in $P$ and strictly decreases in $\mu$ for $P\in[\mu h(0),\mu h(\overline{v})+(1-\mu)\overline{v}]$. If $h'>1$, then the inverse of $h(v)-v$ is continuous and on $[h(0),h(\overline{v})-\overline{v}]$, strictly increases in its argument. Further, if $h'>1$ and $\mu\neq \mu_j(P_{j})$, then the indifference $\mu h(v)+(1-\mu)v-P =\mu_{j}(P_j)h(v)+(1-\mu_j(P_{j}))v-P_{j}$ holds for at most one $v\in[0,\overline{v}]$. 
Therefore if $h'>1$ and $\mu\neq \mu_j(P_{j})$, then 
\begin{align*}
&D^{\mu}_{i}(P) 
=\int_{v_{\mu}(P)}^{\overline{v}}\text{\textbf{1}}\set{[\mu-\mu_{j}(P_{j})] [h(v)-v]>P-P_{j} }dF_{v}(v)
\end{align*} 
is continuous in $P,\mu$. In that case, for any distribution over $P_{j}$, the expected demand and profit are continuous in $P,\mu$. 

Suppose w.l.o.g.\ that firm $\x$ does not fully separate and gets positive demand, i.e.\ there exists $P_{\x0}$ chosen by both types of $\x$, with $D_{\x}(P_{\x0})>0$. Being chosen by both types implies $\mu_{\x}(P_{\x0})\in(0,1)$ and by Lemma~\ref{lem:monot}, $P_{\x0}\geq c_{\good}$ and is unique. The assumption $c_{\good}>h(0)$ implies $D_{\x}(P_{\x0})<1$. 

For any $\sigma_{\y\theta}^{*}$, $\mu_{\y}(\cdot)$, there exists $P_{\x1}$ large enough s.t.\ all consumers $v\geq c_{\bad}$ prefer firm $\y$ to $\x$ even at belief $\mu_{\x}=1$. Then firm $\x$, in particular type $\bad$, does not deviate from $P_{\x0}$ to $P_{\x}\geq P_{\x1}$, because $D_{\x}^{\mu}(P_{\x})\leq D_{\x}^{1}(P_{\x1})=0\;\forall \mu$. 

Due to $D_{\x}(P_{\x0})\in(0,1)$ and $\mu_{\x}(P_{\x0})<1$, we have $D_{\x}^{1}(P_{\x0})>D_{\x}(P_{\x0})$. Continuity of $D_{\x}^{1}(\cdot)$ is ensured, because the support of $\sigma_{j\good}^{*}$ does not include $P_{j}$ s.t.\ $\mu_{j}(P_{j})=1$. Thus for $\epsilon>0$ small enough, $D_{\x}^{1}(P_{\x0}+\epsilon)>D_{\x}(P_{\x0})$. 

Using the Mean Value Theorem and $(P_{\x 0}+\epsilon-c_{\bad})D^{1}_{i}(P_{\x 0}+\epsilon)> (P_{\x 0}-c_{\bad})D^{\mu_{\x}(P_{\x 0})}_{i}(P_{\x 0}) >(P_{\x 1}-c_{\bad})D^{1}_{i}(P_{\x 1})$, there exists $P_{*}\in(P_{\x0},P_{\x1})$ such that $(P_{*}-c_{\bad})D^{1}_{i}(P_{*}) =(P_{\x 0}-c_{\bad})D^{\mu_{\x}(P_{\x 0})}_{i}(P_{\x 0})$. Type $\bad$ strictly prefers the equilibrium price $P_{\x 0}$ to any $P>P_{*}$. For $\epsilon>0$ small enough, type $\good$ strictly prefers $P_{*}+\epsilon$ to $P_{\x 0}$, because the indifference of $\bad$ between $P_{\x 0}$ and $P_{*}$ implies $D^{1}_{i}(P_{*}) <D^{\mu_{\x}(P_{\x 0})}_{i}(P_{\x 0})$, thus $(P_{*}-c_{\bad})D^{1}_{i}(P_{*}) -(c_{\good}-c_{\bad})D^{1}_{i}(P_{*}) =(P_{*}-c_{\bad})D^{\mu_{\x}(P_{\x 0})}_{i}(P_{\x 0}) -(c_{\good}-c_{\bad})D^{1}_{i}(P_{*}) >(P_{\x 0}-c_{\bad})D^{\mu_{\x}(P_{\x 0})}_{i}(P_{\x 0})-(c_{\good}-c_{\bad})D^{\mu_{\x}(P_{\x 0})}_{i}(P_{\x 0})$. 
%
%
%
\end{proof}
An immediate corollary of Lemma~\ref{lem:elimpool3} is that equilibria where both firms pool are removed by the Intuitive Criterion. In such equilibria, at least one firm obtains positive demand without fully separating and the other firm's good type does not separate. 
Lemma~\ref{lem:elimpool3} stands in contrast to the constant quality premium case in Lemma~\ref{lem:asypool}, where all consumers switch firms at the same price, which makes profit discontinuous. Then there exist parameter values such that for any price and belief combination, either both types deviate to it or neither does. Simultaneous deviations mean that the Intuitive Criterion cannot eliminate the equilibrium.

\subsubsection{Constant quality premium}
\label{sec:const}

This section restricts attention to consumers who all have the same difference $h(v)-v=\nu$ in their valuations for good and bad quality. 

The following lemma shows that in any separating equilibrium, the bad type prices strictly above its marginal cost. The intuition is that otherwise the bad type would imitate $\good$ to make positive profit. 
\begin{lemma}
\label{lem:posprofitBertrand}
In any separating equilibrium, $\underline{P}_{i\bad}>c_{\bad}$, $\pi_{i\bad}^{*}>0$ and $\underline{P}_{i\good}>\overline{P}_{i\bad}+\nu$, and if $c_{\good}-c_{\bad}\leq \nu$, then $\underline{P}_{i\good}>c_{\good}$ and $\pi_{i\good}^{*}>0$ for $i\in\set{\x,\y}$. 
\end{lemma}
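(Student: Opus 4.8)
The plan is to reduce all four claims to a single structural fact: in any separating equilibrium firm $i$'s good type obtains strictly positive demand at some price it plays. Separation fixes the on-path beliefs, $\mu_i(P)=1$ on $\supp\sigma_{i\good}^{*}$ and $\mu_i(P)=0$ on $\supp\sigma_{i\bad}^{*}$, so a type's demand at one of its own prices $P$ is the duopoly demand evaluated at the quality-adjusted price $P-\mu\nu$, i.e.\ at $P-\nu$ for the good type and at $P$ for the bad type. Under a constant premium the consumer's comparison $\mu_i\nu-P_i$ versus $\mu_j\nu-P_j$ is independent of $v$, so the firm with the lower quality-adjusted price takes every buyer with $v$ above its threshold; integrating against the rival's (type-independent) mixed strategy shows that both of firm $i$'s types face one and the same weakly decreasing demand function $\tilde D(q)$ of the own quality-adjusted price $q=P-\mu\nu$. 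Granting that $\tilde D(\underline P_{i\good}-\nu)>0$, I would derive (c), then (a)--(b), then (d); the positivity is the crux and I treat it last.

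For (c), apply Lemma~\ref{lem:monot} to the pair $\underline P_{i\good}$ and $\overline P_{i\bad}$: it gives $\tilde D(\underline P_{i\good}-\nu)\le\tilde D(\overline P_{i\bad})$. Since the left side is positive, the hypothesis $D_{i\bad}>0$ of the lemma's ``equality iff equal prices'' clause holds, and disjointness of the supports forces the strict inequality $\tilde D(\underline P_{i\good}-\nu)<\tilde D(\overline P_{i\bad})$. By the monotonicity of $\tilde D$ this is incompatible with $\underline P_{i\good}-\nu\le\overline P_{i\bad}$, so $\underline P_{i\good}>\overline P_{i\bad}+\nu$, which is (c). For (b), choose $P_g\in\supp\sigma_{i\good}^{*}$ with $D_{i\good}:=\tilde D(P_g-\nu)>0$. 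The bad type deviating to $P_g$ cannot be detected, so it inherits the belief $1$ and hence the good type's demand $D_{i\good}$, earning $(P_g-c_\bad)D_{i\good}$. Optimality of the bad type together with $\pi_{i\good}^{*}=(P_g-c_\good)D_{i\good}$ yields $\pi_{i\bad}^{*}\ge(P_g-c_\bad)D_{i\good}=\pi_{i\good}^{*}+(c_\good-c_\bad)D_{i\good}\ge(c_\good-c_\bad)D_{i\good}>0$, using $\pi_{i\good}^{*}\ge0$ and $c_\good>c_\bad$; this is (b). For (a), at prices $P_b$ near $\underline P_{i\bad}$ the constant equilibrium profit obeys $0<\pi_{i\bad}^{*}=(P_b-c_\bad)\tilde D(P_b)\le P_b-c_\bad$ because $\tilde D\le1$, so $\underline P_{i\bad}\ge c_\bad+\pi_{i\bad}^{*}>c_\bad$.

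It remains to establish $\tilde D(\underline P_{i\good}-\nu)>0$, which I expect to be the main obstacle. The idea is that if the good type were shut out then $\pi_{i\good}^{*}=0$, and a deviation to $c_\good+\epsilon$ would, even under the least favourable belief $\mu=0$, capture every consumer with $v\ge c_\good+\epsilon$ on those rival realisations whose quality-adjusted price exceeds $c_\good+\epsilon$, giving profit of order $\epsilon$ times a positive probability and so a contradiction. The delicate point is showing that this probability is positive, i.e.\ that a separating equilibrium cannot have the rival's quality-adjusted price at or below $c_\good$ almost surely. I would argue by equilibrium consistency: a rival whose good type faced no effective competition (firm $i$'s good type absent and its bad type at a low quality-adjusted price) would shade its own price upward to exploit the premium $\nu$, pushing some rival quality-adjusted price above $c_\good$ and reopening a profitable deviation for firm $i$'s good type; the first line of Lemma~\ref{lem:withbeliefthreats}, that not all types can be shut out, anchors this. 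Once $\tilde D(\underline P_{i\good}-\nu)>0$ is in hand, (d) is pure arithmetic: chaining (c) and (a) gives $\underline P_{i\good}>\overline P_{i\bad}+\nu\ge\underline P_{i\bad}+\nu>c_\bad+\nu$, and when $\nu\ge c_\good-c_\bad$ this exceeds $c_\bad+(c_\good-c_\bad)=c_\good$, so $\underline P_{i\good}>c_\good$ and $\pi_{i\good}^{*}=(\underline P_{i\good}-c_\good)\tilde D(\underline P_{i\good}-\nu)>0$.
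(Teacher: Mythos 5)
Your reductions of the four claims to the single fact that the good type's equilibrium demand is positive are sound and essentially reproduce the paper's logic: (c) via the demand monotonicity of Lemma~\ref{lem:monot} expressed in quality-adjusted prices, (b) via the bad type imitating a good-type price, (a) from positive profit, and (d) by chaining (c) with $c_{\good}-c_{\bad}\leq\nu$. The problem is that the one step you defer --- positivity of $\tilde D(\underline P_{i\good}-\nu)$ --- is exactly the step you never close, and your sketch for it does not work as stated. A deviation by a shut-out good type to $c_{\good}+\epsilon$ under the worst belief $\mu=0$ can genuinely yield zero demand: the rival's bad type may price in $(c_{\bad},c_{\good})$ and the rival's good type may have quality-adjusted price below $c_{\good}$, so the ``positive probability'' you need is not guaranteed, and the ``equilibrium consistency'' argument about the rival shading upward is a gesture, not a proof. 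You are also trying to prove more than is needed, namely positive demand for \emph{both} firms' good types, which is a harder claim and not obviously true in asymmetric separating equilibria.

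The paper closes the gap with a much cheaper device: order the firms w.l.o.g.\ so that $\underline P_{i\good}\leq\underline P_{j\good}$. Then with probability $\mu_0$ the rival is the good type, both firms are believed good at these prices, and the rival's price is weakly higher, so firm $i$'s good type gets at least $\tfrac{\mu_0}{2}[1-F_v(\underline P_{i\good}-\nu)]>0$. That yields $\pi_{i\bad}^{*}>0$ for this one firm by imitation; for the \emph{other} firm the paper does not go through its good type's demand at all, but instead notes that $j\bad$ can undercut to $P\in(c_{\bad},\underline P_{i\bad})$ and capture $i\bad$'s customers, giving $\pi_{j\bad}^{*}>0$ directly. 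You should adopt this asymmetric treatment: establish positive good-type demand only for the firm with the lower minimal good price, and route the second firm's claims through the bad type's undercutting deviation rather than through a positivity claim you cannot verify.
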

\begin{proof}
Assume $\underline{P}_{i\good}\leq \underline{P}_{j\good}$ w.l.o.g., so $D_{i}(\underline{P}_{i\good}) \geq \frac{\mu_{0}}{2}[1-F_{v}(\underline{P}_{i\good}-\nu)]>0$. Firm $i$'s type $\bad$ has the option to set $\underline{P}_{i\good}$ and make positive profit, so $\pi_{i\bad}^{*}>0$. Then $\pi_{j\bad}^{*}>0$, because $j\bad$ can set $P\in(c_{\bad},\underline{P}_{i\bad})$ and attract all customers from $i\bad$. Positive profit for $\bad$ implies $\underline{P}_{i\bad}>c_{\bad}$. 

Suppose $\underline{P}_{i\good}\leq \overline{P}_{i\bad}+\nu$, then any consumer who buys at $\overline{P}_{i\bad}$ also buys at $\underline{P}_{i\good}$. This implies $D_i^{1}(\underline{P}_{i\good})\geq D_i^{0}(\overline{P}_{i\bad})$, which motivates $\bad$ to imitate $\good$. 

If $c_{\good}-c_{\bad}\leq \nu$, then $\underline{P}_{i\good}> \overline{P}_{i\bad}+\nu>c_{\bad}+\nu\geq c_{\good}$, therefore $\pi_{i\good}^{*}>0$. 
\end{proof}

Lemma~\ref{lem:Bmonopoly} shows that the bad types price below their monopoly level. 
\begin{lemma}
\label{lem:Bmonopoly}
In any separating equilibrium, 
$\overline{P}_{i\bad}=\overline{P}_{j\bad}\leq P_{\bad}^{m}$ 
for $i\in\set{\x,\y}$. 
\end{lemma}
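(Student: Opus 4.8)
The plan is to recognise that in a separating equilibrium the two bad types play a Varian/Spulber-style mixed-strategy price-competition game against each other, with each rival's good type supplying ``captive'' demand that is realised with probability $\mu_0$. First I would record the structure. Since types separate, every price in $\supp\sigma_{i\bad}^{*}$ carries belief $\mu_{i}=0$, so a consumer of valuation $v$ gets surplus $v-P$ from firm $i$'s bad type at $P$; hence the set of buyers lies in $\set{v\geq P}$ and demand never exceeds $D^{0}(P)=1-F_{v}(P)$. By Lemma~\ref{lem:monot} the bad type prices below the good within each firm, and by Lemma~\ref{lem:posprofitBertrand} we have $\underline{P}_{i\good}>\overline{P}_{i\bad}+\nu$, so the good types sit strictly above the bad prices by more than the premium $\nu$. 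With a constant premium, firm $i$'s bad type at $P$ beats a rival good type priced at $P'$ precisely when $P<P'-\nu$ (an all-or-nothing comparison), and beats a rival bad type priced at $P''$ precisely when $P<P''$. Thus at price $P$ firm $i$'s bad type earns $(P-c_{\bad})D^{0}(P)=\pi_{\bad}^{m}(P)$, a single-peaked quantity, times the probability that the rival offers a worse deal, which is non-increasing in $P$.

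Second, I would establish $\overline{P}_{\x\bad}=\overline{P}_{\y\bad}$ via the standard common-support argument for such games. Suppose $\overline{P}_{\x\bad}>\overline{P}_{\y\bad}$. For prices $P\in(\overline{P}_{\y\bad},\overline{P}_{\x\bad}]$ in $\supp\sigma_{\x\bad}^{*}$, firm $\y$'s bad type lies surely below $P$ and so undercuts, and firm $\x$'s bad type collects only the event that $\y$ is good and prices above $P+\nu$; its profit there equals $\mu_0\,\pi_{\bad}^{m}(P)\Pr(P_{\y\good}>P+\nu)$. As the second factor is non-increasing while $\pi_{\bad}^{m}$ is single-peaked, constancy of $\x$'s equilibrium profit across its support forces this upper part of the support weakly below $P_{\bad}^{m}$ and, more importantly, constrains the geometry of the two supports near their tops. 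Excluding interior atoms and a connected gap so as to conclude that the two bad supports, and hence their suprema, coincide is the standard but delicate step. I expect this common-support / equal-tops argument to be the main obstacle, precisely because the rival's good type contributes \emph{stochastic} captive demand whose own price distribution must be pinned down jointly with the bad types'.

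Finally, granting equal tops $\overline{P}:=\overline{P}_{\x\bad}=\overline{P}_{\y\bad}$, the monopoly bound follows cleanly. Applying Lemma~\ref{lem:posprofitBertrand} to firm $j$ gives $\underline{P}_{j\good}>\overline{P}_{j\bad}+\nu=\overline{P}+\nu$, so \emph{every} rival good price exceeds $\overline{P}+\nu$ and firm $i$'s bad type is preferred to the rival good type by all consumers at $\overline{P}$. Meanwhile firm $j$'s bad type prices at most $\overline{P}$ and so undercuts. Hence at $\overline{P}$ firm $i$'s bad type sells only in the good event, to the full market $D^{0}(\overline{P})$, giving $\pi_{i\bad}^{*}=\mu_0\,\pi_{\bad}^{m}(\overline{P})$ (I would first rule out an atom of the rival bad type at $\overline{P}$, or note it only splits this demand and does not affect the strict comparison below). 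If $\overline{P}>P_{\bad}^{m}$, then deviating to $P_{\bad}^{m}<\overline{P}$ keeps firm $i$'s bad type preferred to the rival good type, since $P_{\bad}^{m}<\overline{P}<\underline{P}_{j\good}-\nu$, and so guarantees at least the good-event demand $D^{0}(P_{\bad}^{m})$; this yields profit at least $\mu_0\,\pi_{\bad}^{m}(P_{\bad}^{m})>\mu_0\,\pi_{\bad}^{m}(\overline{P})=\pi_{i\bad}^{*}$ by single-peakedness, contradicting optimality. Thus $\overline{P}_{\x\bad}=\overline{P}_{\y\bad}\leq P_{\bad}^{m}$.
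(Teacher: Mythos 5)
Your setup and your final step are sound: the all-or-nothing switching under a constant premium, the resulting profit decomposition $\mu_0\Pr(P_{j\good}>P+\nu)\,\pi_{\bad}^{m}(P)$ at prices above the rival bad type's support, and the deviation from $\overline{P}>P_{\bad}^{m}$ down to $P_{\bad}^{m}$ (which only improves the bad type's belief in a separating equilibrium and so secures at least the good-event demand $D^{0}(P_{\bad}^{m})$) are all correct and the last of these is essentially the paper's closing argument. The genuine gap is the equality $\overline{P}_{\x\bad}=\overline{P}_{\y\bad}$, which is half of the statement and which you explicitly set aside as ``the standard but delicate step'' and ``the main obstacle.'' Your constancy-of-profit observation does not close it: since $\Pr(P_{\y\good}>P+\nu)$ is non-increasing and $\pi_{\bad}^{m}$ is single-peaked, constancy only excludes a \emph{nondegenerate interval} of $\supp\sigma_{\x\bad}^{*}$ lying above $\max\set{\overline{P}_{\y\bad},P_{\bad}^{m}}$; it does not exclude an isolated or gapped top of the support, nor the configuration $\overline{P}_{\x\bad}\in(\overline{P}_{\y\bad},P_{\bad}^{m}]$ where $\pi_{\bad}^{m}$ is still increasing. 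And your third paragraph is explicitly conditional on equal tops, so nothing in the proposal discharges that hypothesis.

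The paper avoids this by anchoring the argument at the \emph{infima} rather than the suprema. Using $\underline{P}_{j\good}>\overline{P}_{j\bad}+\nu$ from Lemma~\ref{lem:posprofitBertrand}, all customers of the rival's good type would switch to a bad type pricing at or below $\overline{P}_{j\bad}$, so below the rival bad's minimum a bad type is effectively a monopolist whose belief can only improve by deviating; single-peakedness of $\pi_{\bad}^{m}$ then forces $\underline{P}_{\x\bad}=\underline{P}_{\y\bad}\leq P_{\bad}^{m}$. Next it argues the bad types' distributions are atomless (belief threats cannot deter a bad type from undercutting a rival atom, and $\underline{P}_{i\bad}>c_{\bad}$ leaves room to do so), and only then applies the deviation-to-$P_{\bad}^{m}$ step to any support point above $P_{\bad}^{m}$. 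If you want to rescue your route, you would need to supply the equal-tops argument directly, e.g.\ by combining $\overline{P}_{i\bad}\leq P_{\bad}^{m}$ with atomlessness to show that any mass of $\sigma_{\x}^{\bad*}$ in $(\overline{P}_{\y\bad},\overline{P}_{\x\bad}]$ would have to concentrate at $\overline{P}_{\x\bad}$ (an atom), because profit there is $\mu_0\Pr(P_{\y\good}>P+\nu)\pi_{\bad}^{m}(P)$ with $\pi_{\bad}^{m}$ strictly increasing below $P_{\bad}^{m}$; as written, that piece is missing.
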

\begin{proof}
The result $\underline{P}_{i\good}>\overline{P}_{i\bad}+\nu$ in Lemma~\ref{lem:posprofitBertrand} implies that if $\underline{P}_{i\good}-\overline{P}_{j\bad}\geq \underline{P}_{j\good}-\overline{P}_{i\bad}$, then all consumers at $i\good$ who do not leave the market switch to $j\bad$ given the chance. 
If all customers at $i\good$ prefer to switch to $\overline{P}_{j\bad}$, then $\underline{P}_{j\bad}\geq \min\set{P_{\bad}^{m}, \underline{P}_{i\bad}}$, because at prices below $\underline{P}_{i\bad}$, firm $j$'s type $\bad$ is a monopolist and deviating only improves belief for $\bad$ in a separating equilibrium. Thus demand after raising price from $\underline{P}_{j\bad}$ is at least as great as for a monopolist known to be type $\bad$. 

By Lemma~\ref{lem:posprofitBertrand}, $\underline{P}_{j\good}>\overline{P}_{j\bad}+\nu$ and by the previous paragraph, $\underline{P}_{j\bad} \geq \underline{P}_{i\bad}$, so due to $\overline{P}_{j\bad}\geq \underline{P}_{j\bad}$, all consumers at $j\good$ strictly prefer $\underline{P}_{i\bad}$. 
Thus $\underline{P}_{i\bad}\geq \min\set{P_{\bad}^{m}, \underline{P}_{j\bad}}$. 

Suppose $\underline{P}_{i\bad}>P_{\bad}^{m}\leq \underline{P}_{j\bad}$. Deviating from $\underline{P}_{i\bad}$ to $P_{\bad}^{m}$ results in demand $D_{i}^{\mu_{i}(P_{\bad}^{m})}(P_{\bad}^{m})$ weakly above the monopoly level $D_{i}^{0}(P_{\bad}^{m})$. In addition, some consumers may switch to $j$ when facing $\underline{P}_{i\bad}$, but none switch at $P_{\bad}^{m}\leq \underline{P}_{j\bad}\leq \overline{P}_{j\bad}<\underline{P}_{j\good}-\nu$. Thus demand for $i\bad$ increases relatively more than in a complete information monopoly environment, where cutting price from $\underline{P}_{i\bad}$ to $P_{\bad}^{m}$ is profitable. The implication $\underline{P}_{i\bad} \leq P_{\bad}^{m} \leq \underline{P}_{j\bad}$ contradicts  $\underline{P}_{i\bad}>P_{\bad}^{m}\leq \underline{P}_{j\bad}$, therefore $\underline{P}_{i\bad}=\underline{P}_{j\bad}\leq P_{\bad}^{m}$. 
%

Separation implies positive profit by Lemma~\ref{lem:posprofitBertrand}, so $\underline{P}_{i\bad}>c_{\bad}$ and a small enough price cut does not lead to negative profit. 
Both firms' $\bad$ types choose an atomless price distribution in any separating equilibrium, because belief threats do not deter the bad type from undercutting an atom of the rival.  

At $\overline{P}_{i\bad}\geq \overline{P}_{j\bad}$, demand is only positive if the other firm is type $\good$. 
Deviating from $P_{i\bad} >P_{\bad}^{m}$ to $P_{\bad}^{m}$ increases demand from $\frac{1-\sigma_{j}^{\bad*}(P_{i\bad})}{2}D_{i}^{0}(P_{i\bad}) <\frac{1}{2}D_{i}^{0}(P_{i\bad})$ to $\frac{1}{2}D_{i}^{0}(P_{\bad}^{m}) \geq \frac{1}{2}D_{i}^{0}(P_{i\bad})$, which is a greater increase than under complete information monopoly. By assumption, monopoly profit under complete information is single peaked, so the demand increase from $P_{i\bad} >P_{\bad}^{m}$ to $P_{\bad}^{m}$ makes deviation profitable. Therefore $\overline{P}_{i\bad}\leq P_{\bad}^{m}$ for each firm. 
\end{proof}

%


Demand is positive on the support of prices, otherwise $\bad$ would deviate to $c_{\bad}+\epsilon$. Positive demand and $\pi_{\good}^{*}\geq 0$ imply that the support of the equilibrium strategies is weakly above $c_{\good}$. 
The assumption $\overline{v}>c_{\good}$ 
implies that the prices are strictly above $c_{\good}$, because otherwise $\good$ would deviate to $P=c_{\good}+\epsilon$ even at the worst belief $\mu_i(c_{\good}+\epsilon)=0$. 
Thus under incomplete information, price is always strictly above marginal cost for both types, which differs from a situation with public cost and quality, as shown in the next section.

\subsubsection{Comparison to public positively correlated quality and cost} 
\label{sec:completeinfoposcorrcompar}

Incomplete information may increase or decrease prices in Bertrand competition, as shown in this section for positive correlation of cost and quality (the case of negative correlation is in Section~\ref{sec:completeinfonegcorrcompar}). This indeterminate effect contrasts with costly search \citep{heinsalu2019a}, where asymmetric information greatly enhances competition (decreases price from monopoly to competitive) under negatively associated quality and cost, but reduces competition under positive correlation. 

Under complete information Bertrand competition with positively correlated cost and quality, if $c_{\good}-c_{\bad}>\nu$, then trade occurs at $P_{\good}=c_{\good}$ with probability $\mu_0^2$, at $P_{\bad}=c_{\good}-\nu$ with probability $2\mu_0(1-\mu_0)$ and at $P_{\bad}=c_{\bad}$ with probability $(1-\mu_0)^2$. If instead $c_{\good}-c_{\bad}<\nu$, then the trading price is $P_{\good}=c_{\good}$ with probability $\mu_0^2$, is $P_{\good}=c_{\bad}+\nu >c_{\good}$ with probability $2\mu_0(1-\mu_0)$ and $P_{\bad}=c_{\bad}$ with probability $(1-\mu_0)^2$. 

By contrast, incomplete information implies that trade occurs at a (semi)pooling price $P_0> c_{\good}$ (with probability at least $1-\mu_0^2$) or a semiseparating price $P_{i\good s}>P_0+(1-\mu_0)\nu$. 
The \emph{ex ante} expected price under incomplete information is higher than under complete iff either $c_{\good}-c_{\bad} \geq\nu$ or $\mu_0\notin(\underline{\mu}_0,\overline{\mu}_0)\subset(0,1)$. 
Prices are always strictly above marginal cost when the marginal cost and quality are private, but when these are public, then at least one firm charges a price equal to its marginal cost.

\subsubsection{Comparison to homogeneous consumers} 
\label{sec:homog}

If information is complete or cost and quality are negatively related (as in Section~\ref{sec:negcorr}), then the strategies of the firms against homogeneous and heterogeneous consumers are the same. 
However, under incomplete information and positive correlation of cost and quality,  \cite{janssen+roy2015} Proof of Proposition~2 initially claims the unique symmetric D1 equilibrium: 
\begin{enumerate}[(a)]
\item If 
$v_{\bad}>c_{\bad}+ v_{\good}-v_{\bad}$, 
then $P_{\good}=v_{\good} >v_{\bad}>c_{\bad}+v_{\good}-v_{\bad}$. 
\item If 
$v_{\bad}\leq c_{\bad}+ v_{\good}-v_{\bad}$, then 
$P_{\good}=\max\set{c_{\good},c_{\bad}+2(v_{\good}-v_{\bad})}$ and all consumers buy. 
\end{enumerate}
From the second paragraph on, \cite{janssen+roy2015} Proof of Proposition~2 says:  
\begin{enumerate}[(a)] 
\item If 
$v_{\bad}\geq c_{\bad}+ v_{\good}-v_{\bad}$, then $P_{\good}=c_{\bad}+2(v_{\good}-v_{\bad})$ and type $\bad$ mixes over $P_{\bad}\in[c_{\bad}+\mu_{0}(v_{\good}-v_{\bad}),\; c_{\bad}+v_{\good}-v_{\bad}]$. 
\item If 
$v_{\bad}< c_{\bad}+ v_{\good}-v_{\bad}$, then $P_{\good}=v_{\good}$ and type $\bad$ mixes over $[c_{\bad}+\mu_{0}(v_{\bad}-c_{\bad}),\; v_{\bad}]$. 
\end{enumerate}
Nonnegative profit for $\good$ requires $P_{\good}=c_{\bad}+2(v_{\good}-v_{\bad})\geq c_{\good}$, so incomplete information always increases $P_{\good}$, sometimes strictly. If $c_{\good}-c_{\bad}\leq v_{\good}-v_{\bad}$, then incomplete information strictly increases $P_{\bad}$, but if $c_{\good}-c_{\bad}> v_{\good}-v_{\bad}$ and 
$\mu_{0} <\frac{c_{\good}-c_{\bad}-v_{\good}+v_{\bad}}{v_{\bad}-c_{\bad}}$, 
then there is a positive probability that $P_{\bad}$ is lower under incomplete information. 
As $\mu_0\rightarrow 0$, the probability of trade at $P_{\bad}$ goes to $1$ in both cases. 


\subsection{Negatively correlated cost and quality}
\label{sec:negcorr}

In this section, the only differences from Section~\ref{sec:incompleteinfo} are that a firm with good quality has a lower cost, the gains from trade are positive for a bad quality firm, but not all consumers buy at the bad type's cost, and the complete information monopoly profit of the good type increases in price for prices below the bad type's cost. Formally, $c_{\good}<c_{\bad}<\overline{v}$, $c_{\bad}\geq h(0)$ and $\frac{d}{dP}P[1-F_{v}(h^{-1}(P))]>0$ for all $P\in[0,c_{\bad}]$. Normalise $c_{\good}=0$ w.l.o.g. 

Analogously to Lemma~\ref{lem:monot}, demand and price are monotone in type in any equilibrium, but due to $c_{\good}<c_{\bad}$, the direction of the monotonicity switches. 
Denote by $D_{i}(P)$ the equilibrium demand for firm $i$ at price $P$. 
\begin{lemma}
\label{lem:monotnegcorr}
In any equilibrium for any $P_{\theta}$ in the support of $\sigma_i^{\theta*}$, $D_i(P_{\good})\geq D_i(P_{\bad})$, and if $D_i(P_{\bad})>0$, then $P_{\good}\leq P_{\bad}$. 
\end{lemma}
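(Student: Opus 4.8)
The plan is to mirror the proof of Lemma~\ref{lem:monot}, adapting it to the reversed cost ordering $c_{\good}<c_{\bad}$. As in that proof, I would write $D_{\theta}:=D_i(P_{\theta})$ for the equilibrium demand at a price $P_{\theta}$ in the support of $\sigma_i^{\theta*}$, and start from the two incentive constraints that must hold for the types to be willing to play their prices: IC$_{\good}$: $(P_{\good}-c_{\good})D_{\good}\geq (P_{\bad}-c_{\good})D_{\bad}$, and IC$_{\bad}$: $(P_{\bad}-c_{\bad})D_{\bad}\geq (P_{\good}-c_{\bad})D_{\good}$. Here each type compares its equilibrium payoff to the payoff from imitating the other type's price, obtaining the demand associated with that price (since demand is pinned down by the price, or by the belief it induces).

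The key algebraic step is to combine the two ICs to separate the price-weighted demand term from the cost term. Adding the two inequalities (or rearranging each as in Lemma~\ref{lem:monot}) isolates the common quantity $P_{\good}D_{\good}-P_{\bad}D_{\bad}$. Using $c_{\good}<c_{\bad}$, the bracketing by $c_{\good}$ and $c_{\bad}$ should yield $D_{\good}-D_{\bad}\geq 0$, i.e.\ $D_{\good}\geq D_{\bad}$, which is the first claim. The sign flip relative to Lemma~\ref{lem:monot} comes precisely from $c_{\good}<c_{\bad}$ reversing the direction in which the difference $D_{\good}-D_{\bad}$ is pinned down between the two cost-normalised bounds. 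I would then extract the ordering of the price-weighted demands, namely $P_{\good}D_{\good}\leq P_{\bad}D_{\bad}$.

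For the price comparison, I would suppose $D_{\bad}>0$ and argue $P_{\good}\leq P_{\bad}$. Since $D_{\good}\geq D_{\bad}>0$, both demands are strictly positive, so I can combine $P_{\good}D_{\good}\leq P_{\bad}D_{\bad}$ with $D_{\good}\geq D_{\bad}$: from $P_{\good}D_{\good}\leq P_{\bad}D_{\bad}\leq P_{\bad}D_{\good}$ (the last step using $D_{\bad}\leq D_{\good}$ together with $P_{\bad}\geq 0$), dividing through by the strictly positive $D_{\good}$ gives $P_{\good}\leq P_{\bad}$. This chain uses only nonnegativity of prices and the two inequalities already established, so it is routine once the demand ordering is in hand.

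The main obstacle I anticipate is not the algebra but justifying that the deviation demands entering the ICs are exactly $D_{\bad}$ and $D_{\good}$—that is, that a type imitating the other's price inherits that price's equilibrium demand. This requires the consumer belief at each on-path price to be well defined and independent of which firm-type sets it, so that the monopoly-style demand $D^{\mu_i(P)}(P)$ is the relevant object; the normalisation $c_{\good}=0$ and the sign condition on $\frac{d}{dP}P[1-F_v(h^{-1}(P))]$ stated for this section should let me invoke the analogue of the undercutting arguments without circularity. I would therefore state the ICs carefully in terms of $D^{\mu_i(P_\theta)}(P_\theta)$, note that these are the demands a deviator would face at an on-path price, and otherwise keep the proof as short as the proof of Lemma~\ref{lem:monot}.
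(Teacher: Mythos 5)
Your derivation of the first claim is fine: adding (or bracketing) the two ICs gives $(c_{\bad}-c_{\good})(D_{\good}-D_{\bad})\geq 0$, and $c_{\good}<c_{\bad}$ yields $D_{\good}\geq D_{\bad}$, which is exactly the paper's first step. But the second half has a genuine sign error. When you ``extract the ordering of the price-weighted demands'' you keep the inequality $P_{\good}D_{\good}\leq P_{\bad}D_{\bad}$ from Lemma~\ref{lem:monot}, whereas mirroring that argument under $c_{\good}<c_{\bad}$ flips it: IC$_{\good}$ reads $P_{\good}D_{\good}-P_{\bad}D_{\bad}\geq c_{\good}(D_{\good}-D_{\bad})\geq 0$ (recall $c_{\good}=0$ after the normalisation, and $D_{\good}-D_{\bad}\geq 0$ has just been established), so in fact $P_{\good}D_{\good}\geq P_{\bad}D_{\bad}$. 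Your chain $P_{\good}D_{\good}\leq P_{\bad}D_{\bad}\leq P_{\bad}D_{\good}$ therefore fails at its first link, and the conclusion $P_{\good}\leq P_{\bad}$ does not follow by that route. Note that a smaller price together with a larger demand is entirely consistent with \emph{higher} revenue for the good type, so there is no way to rescue the revenue comparison in the direction you need.

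The paper closes the gap with one extra equilibrium observation that your proof omits: if $D_i(P_{\bad})>0$ then $P_{\bad}\geq c_{\bad}$, since otherwise type $\bad$ would deviate to some $P\geq c_{\bad}$ to avoid a loss. With $P_{\bad}-c_{\bad}\geq 0$ in hand, IC$_{\bad}$ and the demand ordering give
\begin{align*}
(P_{\bad}-c_{\bad})D_i(P_{\good})\;\geq\;(P_{\bad}-c_{\bad})D_i(P_{\bad})\;\geq\;(P_{\good}-c_{\bad})D_i(P_{\good}),
\end{align*}
and dividing by $D_i(P_{\good})\geq D_i(P_{\bad})>0$ yields $P_{\bad}\geq P_{\good}$. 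Your worry about justifying the deviation demands in the ICs is reasonable but secondary; the substantive missing step is the individual-rationality bound $P_{\bad}\geq c_{\bad}$ and the use of IC$_{\bad}$ alone (rather than the revenue inequality) to order the prices.
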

\begin{proof}
If $P_{\bad}D_i(P_{\bad})-c_{\bad}D_i(P_{\bad}) \geq P_{\good}D_i(P_{\good})-c_{\bad}D_i(P_{\good})$ and $P_{\good}D_i(P_{\good})\geq P_{\bad}D_i(P_{\bad})$, then  ${\color{black}P_{\bad}D_i(P_{\bad})}-c_{\bad}D_i(P_{\bad}) \geq {\color{black}P_{\bad}D_i(P_{\bad})} -c_{\bad}D_i(P_{\good})$. 

In equilibrium, $D_i(P_{\bad})>0\Rightarrow P_{\bad}\geq c_{\bad}$, otherwise $\bad$ deviates to $P\geq c_{\bad}$. 

If $(P_{\bad}-c_{\bad})D_i(P_{\bad}) \geq (P_{\good}-c_{\bad})D_i(P_{\good})$ and $D_i(P_{\good})\geq D_i(P_{\bad})>0$, then $P_{\bad}-c_{\bad}\geq P_{\good}-c_{\bad}$. 
\end{proof}

The next lemma, similarly to Lemma~\ref{lem:withbeliefthreats}, rules out some equilibria even when belief threats are possible. 
\begin{lemma}
\label{lem:withbeliefthreatsnegcorr}
In any equilibrium, at least one firm's type $\good$ obtains positive profit. If type $\good$ of firm $i$ gets zero demand, then both types of $j$ set $P_{j}>c_{\bad}$ and receive positive demand and profit. 
If $\mu_0 h(0)<c_{\bad}$ or at least one firm partly separates or firms play symmetric strategies, then type $\good$ of each firm obtains positive profit. 
\end{lemma}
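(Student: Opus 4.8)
The plan is to follow the proof of Lemma~\ref{lem:withbeliefthreats}, interchanging the roles of $\good$ and $\bad$: since $c_{\good}=0$ is now the lowest marginal cost, $\good$ is the aggressive type that can always undercut. I will lean on Lemma~\ref{lem:monotnegcorr} (demand and price are monotone in type, so $\good$ prices weakly below $\bad$ and gets weakly more demand) and on the maintained assumptions $c_{\bad}\geq h(0)$ and that $P[1-F_{v}(h^{-1}(P))]$ strictly increases on $[0,c_{\bad}]$, the latter being the negative-correlation counterpart of the ``avoid trivial separation'' hypothesis. The recurring mechanism is that a zero-cost $\good$ type can profitably serve low-valuation consumers at a price just above $0$ whenever the rival fails to cover them, and no pool the $\bad$ type is willing to join can cover them.

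For the first claim I argue by contradiction, supposing both firms' $\good$ types earn zero profit; since $c_{\good}=0$ this means each $\good$ type has either zero demand or a zero price. If some firm's $\bad$ type had positive demand, then, as $\bad$ never sells below $c_{\bad}>0$, that firm's $\good$ type could imitate the $\bad$ price and earn a strictly positive margin on the same demand, a contradiction; hence every $\bad$ type is demandless. If in addition every $\good$ type is demandless, no consumer buys and a $\good$ type deviating to any $P\in(0,\overline{v})$ attracts all consumers with $v>P$ even at the worst belief, again a contradiction. Otherwise some $\good$ type has positive demand: it profits at a positive price, and even at a zero price it profits by raising price slightly, because with probability $1-\mu_{0}$ the rival is the demandless $\bad$ type and the deviating $\good$ type is then an unchallenged monopolist. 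So some $\good$ type profits.

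For the second claim, Lemma~\ref{lem:monotnegcorr} gives $D_{i}(P_{i\good})\geq D_{i}(P_{i\bad})$, so if $i$'s $\good$ type has zero demand then firm $i$ is wholly inactive and $j$ is an effective monopolist. The decisive step is that $j$ cannot separate: were $j\bad$ at a price with belief $0$, then $i\good$ --- also believed $\bad$, hence valued at $v$ --- could shade just below it and capture all consumers with $v$ above that price, contradicting $i$'s inactivity. Thus $j$ must pool, and keeping $i\good$ out even near the lowest valuations forces the pooled price $P_{0}$ to cover every consumer down to $v=0$, i.e.\ $P_{0}\leq\mu_{0}h(0)$; combined with $P_{0}\geq c_{\bad}$ (nonnegative profit for $j\bad$, strict since a monopolist $j\bad$ would otherwise raise price) this yields $c_{\bad}<P_{0}\leq\mu_{0}h(0)$ with positive demand and profit for both types. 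I would note here that $\mu_{0}<1$ and $c_{\bad}\geq h(0)$ make $\mu_{0}h(0)<c_{\bad}$, so this interval is in fact empty --- the hypothesis is degenerate, which is precisely why the first condition of the third claim is automatic.

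For the third claim each hypothesis either excludes the shielding pool or forces a profitable imitation. Because $\mu_{0}h(0)<c_{\bad}$ always holds, no pool the $\bad$ type will join covers the low-valuation tail, so each $\good$ type serves that tail at a small positive price and profits; this settles the general statement. If a profitable firm partly separates it is not purely pooling, so its revealed $\bad$ price is undercut by the rival's $\good$ type and that firm's own $\good$ type profits by imitating whichever of its types sells, giving both. Under symmetric strategies the firms split demand on average, so in the positive-probability event that both draw $\bad$ the $\bad$ type sells and thus $P_{\bad}\geq c_{\bad}$, whence the zero-cost $\good$ type imitates $P_{\bad}$ and earns $P_{\bad}D_{\bad}>0$. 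I expect the second claim to be the crux: showing that shutting $i\good$ out genuinely forces $j$ to pool (rather than semi-separate) while off-path beliefs may be adversarial, since this is exactly the step where the zero-cost undercutting power of $\good$ must be pitted against belief threats.
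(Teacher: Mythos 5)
Your overall strategy (mirror Lemma~\ref{lem:withbeliefthreats} with $\good$ as the low-cost undercutter, leaning on Lemma~\ref{lem:monotnegcorr}) is the right one, and your treatment of the third claim essentially matches the paper's. But there are two genuine gaps. First, in the last sub-case of your first claim: when some $\good$ type sits at price $0$ with positive demand, you have \emph{that same type} raise its price to $\epsilon$ and call it ``an unchallenged monopolist'' with probability $1-\mu_0$. This does not follow: the rival's types are demandless only against the price-$0$ offer, and at the deviation price the belief can be set to $0$, so a rival posted at a low price with a favourable on-path belief can capture every consumer once the deviator's offer degrades from $\mu_i(0)h(v)+(1-\mu_i(0))v$ to $v-\epsilon$. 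The paper avoids this by switching firms: since $D_i(0)>0$ forces $i\bad$ to separate to some demandless $P_{i\bad}>0$ with $\mu_i(P_{i\bad})=0$, it is the \emph{other} firm's good type that deviates to $P\in(0,P_{i\bad})$; this wins all consumers with $v>P$ whenever firm $i$ realises type $\bad$ (probability $1-\mu_0$), regardless of the belief assigned to the deviator --- exactly the belief-proof deviation your argument lacks.

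Second, you do not actually prove the second claim: you argue its hypothesis is empty (because $c_{\bad}\geq h(0)$ and $\mu_0<1$ give $\mu_0 h(0)<c_{\bad}$) and stop. The observation that the lemma's first condition is automatic under the section's maintained assumptions is sharp and worth raising --- it sits uneasily with the paper's subsequent ``tightness'' example of asymmetric pooling --- but your derivation of vacuity presupposes the very facts the second claim asserts: that firm $j$ must have an on-path price $P_{j0}\geq c_{\bad}$ with belief at most $\mu_0$ and \emph{positive demand} for $j\bad$. The paper obtains these directly: $i\good$ demandless implies $D_i(P)=0$ for every $P>0$ (else $i\good$ deviates), so $j$ faces the whole market and earns positive profit from any $\hat P\in(c_{\bad},\overline v)$, whence $P_j>c_{\bad}$ with positive demand and profit for both types of $j$. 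Your version also leaves open the cases where $j\bad$ is itself demandless (posted at or above $\overline v$), or where $j$'s only revealed-bad price carries zero demand so that undercutting it attracts nobody; in each, ``$j$ must pool'' and ``the pool must cover $v=0$'' need separate arguments. Fill these in and your vacuity remark becomes a legitimate corollary rather than a substitute for the proof.
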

\begin{proof}
Suppose $\pi_{i\good}^*=0$ for both firms. Then for each firm and any $P_{i\good}$ in the support of $\sigma_{i}^{\good *}$, either $P_{i\good}=0$ or $D_i(P_{i\good})=0$. 
If $P_{i\good}=0\leq P_{j\good}$, then $D_i(0)>0$, so type $\bad$ of firm $i$ separates and sets $P_{i\bad}>0$. Then $j\good$ has probability $1-\mu_0$ of facing $i\bad$ with $P_{i\bad}>0$ and $\mu_{i}(P_{i\bad})=0$, so $j\good$ obtains positive demand and profit from $P\in(0,P_{i\bad})$ for any belief $\mu_{j}(P)$. Thus $P_{j\good}>0$ and $\pi_{j\good}^{*}>0$. 

Suppose $D_i(P_{i\good})=0$, then $D_i(P)=0$ for any $P>0$, otherwise $i\good$ would deviate to $P$ to get positive profit. Total demand $D_{\x}(P)+D_{\y}(P)$ is positive for any $P<\overline{v}$ for any beliefs $\mu_{\x}(P),\mu_{\y}(P)$, so if $D_i(P)=0$, then $D_{j}(P)>0$. Then both types of $j$ get positive profit from any $\hat{P}\in(c_{\bad},\overline{v})$, thus $j\bad$ sets $P_{j\bad}>c_{\bad}$. 

If $j\bad$ partly separates, i.e.\ sets $P_{j\bad}$ with $\mu_{j}(P_{j\bad})=0$, then both types of firm $i$ can get positive demand and profit by setting $P_{i}\in(c_{\bad},P_{j\bad})$, regardless of $\mu_{i}(P_{i})$. 

If $j\bad$ (semi)pools with $j\good$, then there exists $P_{j0}$ chosen by both $j\bad$ and $j\good$ with $\mu_{j}(P_{j0})\leq \mu_{0}$. 
If $\mu_0 h(0)< c_{\bad}$, then for $\epsilon>0$ small, consumers with valuations $v\in(0,\epsilon)$ strictly prefer to buy from $i$ at $P_{i}\in(0,v)$ and any belief, rather than from $j$ at $P_{j0}\geq c_{\bad}$ and $\mu_{j}(P_{j0})\leq \mu_{0}$. This makes $i\good$ deviate to $P_{i}\in(0,v)$ to get positive profit. 

If the firms play symmetric strategies, then consumers' beliefs are $\mu_{\x}(P)=\mu_{\y}(P)$ for any $P$ chosen in equilibrium, thus $D_{\x}(P)=D_{\y}(P)$. From $D_{\x}(P)+D_{\y}(P)>0$ for any $P<\overline{v}$, we get $P_{i\bad}\geq c_{\bad}$, otherwise $\bad$ would deviate to $P_{d}\geq c_{\bad}$ to obtain nonnegative profit. Type $\good$ can imitate $P_{i\bad}\geq c_{\bad}$ and receive $D_{i}(P_{i\bad})>0$, thus $\pi_{i\good}^*>0$. 
%
\end{proof}
Similarly to Lemma~\ref{lem:withbeliefthreats}, the results of Lemma~\ref{lem:withbeliefthreatsnegcorr} are tight. Asymmetric pooling (both types of firm $i$ set $P_{i0}>c_{\bad}$ and both types of $j$ set $P_{j0}\in(c_{\bad},P_{i0})$) ensures zero profit for both types of $j$. 
Symmetric pooling on $c_{\bad}$ guarantees zero profit for the bad types. 

Unlike under positive correlation of cost and quality, additional results do not require a constant quality premium $h(v)-v$. This is because all consumers prefer type $\good$ at $P_{\good}\leq P_{\bad}$. The Intuitive Criterion of \cite{cho+kreps1987} selects a unique equilibrium, as shown below. 
\begin{thm}
\label{thm:unique}
In the unique equilibrium passing the Intuitive Criterion, $P_{\bad}=c_{\bad}$ and type $\good$ mixes on $[\underline{P}_{\good},c_{\bad})$, where $\underline{P}_{\good}[1-F_{v}(h^{-1}(\underline{P}_{\good}))] =c_{\bad}(1-\mu_0)[1-F_{v}(h^{-1}(c_{\bad}))]$ and the cdf of price is 
\begin{align*}
\sigma_{i}^{\good *}(P) =\frac{1}{\mu_{0}}-\frac{c_{\bad}(1-\mu_0)[1-F_{v}(h^{-1}(c_{\bad}))]}{\mu_0P[1-F_{v}(h^{-1}(P))]}. 
\end{align*}
\end{thm}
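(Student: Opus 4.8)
The plan is to let the Intuitive Criterion drive full separation of type $\good$ below $c_{\bad}$ and then read the mixing distribution off a single indifference condition. The opening step is an off‑path belief argument run at any candidate in which type $\bad$ earns positive profit. At such a candidate the rival's prices that attract demand are at least $c_{\bad}$ (Lemma~\ref{lem:monotnegcorr}), so every price $P<c_{\bad}$ is equilibrium‑dominated for type $\bad$: the firm cannot avoid positive demand under any belief—even at the worst belief $\mu_{i}(P)=0$, consumers with valuation just above $P$ strictly prefer the cheap firm $i$ to a rival priced at or above $c_{\bad}$—so type $\bad$'s best attainable payoff at $P$ is strictly negative, while by Lemma~\ref{lem:withbeliefthreatsnegcorr} its equilibrium profit is nonnegative. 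Since $c_{\good}=0$, the same price is not dominated for type $\good$, so the Intuitive Criterion forces $\mu_{i}(P)=1$ for every $P<c_{\bad}$. Type $\good$ can then move just below $c_{\bad}$ as a revealed high‑quality seller and capture the entire market, a deviation that is profitable precisely because revelation removes both the demand‑splitting with the rival and the belief discount. I would thereby conclude that in any surviving equilibrium type $\good$ separates with $\mu_{i}(\cdot)=1$ on its support, that support lies strictly below $c_{\bad}$, and type $\bad$ is pushed to $P_{\bad}=c_{\bad}$ with zero profit, its only sales arising against a bad rival where Bertrand competition equates price to $c_{\bad}$.

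Given this structure, I would compute type $\good$'s expected demand on its support. A revealed good type at $P\in[\underline{P}_{\good},c_{\bad})$ meets a bad rival with probability $1-\mu_{0}$, undercuts it and takes the full demand $D^{1}(P)=1-F_{v}(h^{-1}(P))$, and meets a good rival with probability $\mu_{0}$, winning $D^{1}(P)$ only when the rival prices above $P$, i.e.\ with probability $1-\sigma_{i}^{\good *}(P)$. Expected profit is therefore $P\,D^{1}(P)\,[1-\mu_{0}\sigma_{i}^{\good *}(P)]$, and indifference across the support sets this equal to a constant. Taking the limit $P\uparrow c_{\bad}$, where $\sigma_{i}^{\good *}\to1$, identifies the constant as $c_{\bad}(1-\mu_{0})[1-F_{v}(h^{-1}(c_{\bad}))]$. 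Solving the indifference equation for $\sigma_{i}^{\good *}(P)$ gives the stated cdf, and the requirement $\sigma_{i}^{\good *}(\underline{P}_{\good})=0$ gives the defining equation for $\underline{P}_{\good}$.

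The verification and the role of the standing assumption $\frac{d}{dP}P[1-F_{v}(h^{-1}(P))]>0$ on $[0,c_{\bad}]$ come last. This monotonicity makes $P\,D^{1}(P)$ strictly increasing, so $\sigma_{i}^{\good *}$ is a genuine strictly increasing atomless cdf reaching $1$ as $P\uparrow c_{\bad}$, it pins down $\underline{P}_{\good}$ uniquely, and it deters a good type from deviating below $\underline{P}_{\good}$, where the revealed profit $P\,D^{1}(P)$ falls short of the equilibrium level. Atoms are excluded by the usual undercutting argument, and off‑path prices above $c_{\bad}$ are deterred by assigning the belief $\mu_{i}=0$.

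I expect the main obstacle to be the separation step rather than the algebra. One must argue carefully that type $\bad$ cannot secure zero demand at a price below $c_{\bad}$ under any belief—this is clean when the rival's attractive prices all lie at or above $c_{\bad}$, but the semipooling and asymmetric configurations, and any candidate in which the rival's good type already prices below $c_{\bad}$, have to be closed off separately—and one must confirm that the downward deviation of type $\good$ strictly beats the non‑separating payoff at every such candidate, using the factor from eliminating both demand‑splitting and the belief discount together with the increasing‑revenue assumption. Establishing that the Intuitive Criterion leaves the separating outcome as the unique survivor is the delicate part; the derivation of $\sigma_{i}^{\good *}$ and $\underline{P}_{\good}$ then follows mechanically.
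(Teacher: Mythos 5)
Your overall architecture matches the paper's --- use the Intuitive Criterion to force belief $1$ on prices below $c_{\bad}$, conclude that $\good$ separates downward while the $\bad$ types Bertrand-compete to $c_{\bad}$, then read $\sigma_i^{\good*}$ off the indifference condition --- and your final computation of $\underline{P}_{\good}$ and the cdf is exactly the paper's. But the hard part of the theorem is uniqueness, and that is precisely what you defer and never close. Your opening equilibrium-domination step is run only ``at any candidate in which type $\bad$ earns positive profit.'' The candidates that survive belief threats with $\pi_{\bad}^{*}=0$ --- most importantly symmetric pooling on $P_{0}=c_{\bad}$, which the paper explicitly notes is an equilibrium for low $c_{\bad}$ --- are left to be ``closed off separately'' and never are. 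At such a candidate, a price $P<c_{\bad}$ is equilibrium-dominated for $\bad$ only if demand there is \emph{strictly positive under every belief}, and your supporting claim (that a consumer with valuation just above $P$ always prefers the cheap firm at the worst belief to a rival priced at or above $c_{\bad}$) is false in general: the rival's expected quality premium $\mu_{j}(P_j)[h(v)-v]$ can exceed the price gap $P_{j}-P$, so such consumers may stay with the dearer rival and $\bad$'s best attainable payoff at $P$ is $0$, not strictly negative. The paper avoids this by a different two-stage elimination: first kill zero profit for $\good$, then kill (partial) pooling on $P_{i0}\geq c_{\bad}$ via the threshold $P_{*}:=\sup\{P\leq P_{i0}:(P-c_{\bad})D_i^{1}(P)<\pi_{i\bad}^{*}\}$ and the strict comparison $(P_{i0}-\epsilon)D^{1}(P_{i0}-\epsilon)>P_{i0}D_i(P_{i0})$, which exploits $\mu_i(P_{i0})<1$ rather than a sign argument on $\bad$'s deviation profit.

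A second gap: when you ``take the limit $P\uparrow c_{\bad}$, where $\sigma_i^{\good*}\to 1$,'' you are assuming the support of $\sigma_i^{\good*}$ reaches $c_{\bad}$ and is common to both firms; otherwise the constant in the indifference condition is not $c_{\bad}(1-\mu_0)[1-F_{v}(h^{-1}(c_{\bad}))]$. Ruling out $\overline{P}_{i\good}\leq\overline{P}_{j\good}<c_{\bad}$ (and $\underline{P}_{i\good}<\underline{P}_{j\good}$) requires a further Intuitive-Criterion argument: setting $\mu_{j}(P)=1$ on $[\overline{P}_{j\good},c_{\bad})$ makes $j\good$ a monopolist there, and the standing assumption $\frac{d}{dP}P[1-F_{v}(h^{-1}(P))]>0$ on $[0,c_{\bad}]$ then forces an upward deviation. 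You invoke that assumption only for monotonicity of the cdf and for deterring deviations below $\underline{P}_{\good}$, so the identification of the upper endpoint --- and hence of the constant --- is unproven in your write-up.
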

\begin{proof}
First, rule out equilibria where some firm's type $\good$ gets zero profit. 
Lemma~\ref{lem:withbeliefthreatsnegcorr} proved that $\pi_{i\good}^{*}=P_{i\good}D_{i}(P_{i\good})>0$ for at least one firm. Due to $D_{i}(P_{i\good})\leq 1$, any $P_{i\good}$ in the support of $\sigma_{i}^{\good *}$ is bounded below by $\pi_{i\good}^{*}>0$. Denote by $\underline{P}_{i\theta}$ the lowest price in the support of $\sigma_{i}^{\theta *}$. 
To apply the Intuitive Criterion to show $\pi_{j\good}^{*}>0$ for $j\neq i$, set $\mu_{j}(P)=1$ for $P\in(0,\min\set{\underline{P}_{i\good},c_{\bad}})$. Then all consumers with $h(v)\geq P$ buy from firm $j$ at $P$. Suppose $\pi_{j\good}^{*}=0$, then $j\good$ strictly prefers $P$ to its equilibrium price, but $j\bad$ strictly prefers its equilibrium profit $\pi_{j\bad}^{*}\geq 0$ to the negative profit from $P<c_{\bad}$. This justifies $\mu_{j}(P)=1$ and removes any equilibrium with $\pi_{j\good}^{*}=0$. 

Second, rule out (partial) pooling. 
Positive profit for $\good$ implies positive demand, so pooling is only possible on $P_{i0}\geq c_{\bad}$, otherwise $\bad$ would deviate to nonnegative profit. 
Define $P_{*}:=\sup\set{P\leq P_{i0}:(P-c_{\bad})D_{i}^{1}(P)<\pi_{i\bad}^*}$. Due to $\pi_{i\good}^{*}>0$, we have $D_{i}(P_{0})>0$, so $P_{*}\geq c_{\bad}$. To apply the Intuitive Criterion to rule out (partial) pooling on $P_{i0}\geq c_{\bad}$, set $\mu_{i}(P)=1$ for all $P<P_{*}$. 
Because $i\bad$ puts positive probability on $P_{i0}$, we have $\mu_{i}(P_{i0})<1$. Then for $\epsilon>0$ small enough, $(P_{i0}-\epsilon)D^{1}(P_{i0}-\epsilon) >P_{i0}D_i(P_{i0})$, so $\good$ strictly prefers to deviate to $P_{i0}-\epsilon$, but $\bad$ strictly prefers the equilibrium. This justifies $\mu_{i}(P_{i0}-\epsilon)=1$ and removes (partial) pooling. 

Combining separation of types with Lemma~\ref{lem:monotnegcorr} shows $P_{i\good}<P_{i\bad}$. All consumers strictly prefer $P_{i\good}$ at $\mu_{i}(P_{i\good})=1$ to $P_{i\bad}$ at $\mu_{i}(P_{i\bad})=0$, so $i\bad$ only gets positive demand if firm $j$ has type $\bad$. 
The bad types Bertrand compete, so undercut each other to $P_{\bad}=c_{\bad}$. 

The good types mix atomlessly by the standard reasoning for Bertrand competition with captive customers. Atoms invite undercutting. The good types' price competition cannot reach $P=0$, because with positive probability, the other firm has type $\bad$ and sets $P_{\bad}=c_{\bad}$. This makes type $\good$ strictly prefer $P=c_{\bad}-\epsilon$ to $P\leq \epsilon$ for $\epsilon>0$ small enough. 

Denote by $\overline{P}_{i\theta}$ the supremum of the support of $\sigma_{i}^{\theta *}$. Combining separation, $P_{i\good}\leq P_{i\bad}$ and $P_{i\bad}=c_{\bad}$ yields $\overline{P}_{i\good}\leq c_{\bad}$ for both firms. 

Suppose $\overline{P}_{i\good}\leq \overline{P}_{j\good} <c_{\bad}$. If $\sigma_{i}^{\good*}$ has an atom at $\overline{P}_{i\good}$ and $\overline{P}_{i\good}=\overline{P}_{j\good}$, then set $\mu_{j}(\overline{P}_{j\good}-\epsilon)=1$ for $\epsilon>0$ small enough and $j\good$ will deviate from $\overline{P}_{j\good}$ to undercut $\overline{P}_{i\good}$. Type $\bad$ of firm $j$ strictly prefers not to set $\overline{P}_{j\good}-\epsilon$. Thus the Intuitive Criterion justifies $\mu_{j}(\overline{P}_{j\good}-\epsilon)=1$ and eliminates equilibria with an atom at $\overline{P}_{i\good}=\overline{P}_{j\good}<c_{\bad}$. 

If $\sigma_{i}^{\good*}$ has no atom at $\overline{P}_{i\good}$ or $\overline{P}_{i\good}<\overline{P}_{j\good}$, then demand at $\overline{P}_{j\good}$ is only positive when firm $i$ has type $\bad$. Set $\mu_{j}(P)=1$ for all $P\in[\overline{P}_{j\good},c_{\bad})$, then $D_{j}(P)>0$ iff firm $i$ has type $\bad$, in which case $j\good$ is a monopolist on $P\in[\overline{P}_{j\good},c_{\bad})$. By assumption, the complete information monopoly profit $PD^{1}(P)$ strictly increases on $[0,c_{\bad}]$, so $\good$ strictly prefers any $P\in(\overline{P}_{j\good},c_{\bad})$ to $\overline{P}_{j\good}$. Type $\bad$ strictly prefers not to set $P<c_{\bad}$. Therefore the Intuitive Criterion justifies $\mu_{j}(P)=1$ and eliminates equilibria with $\overline{P}_{i\good}\leq \overline{P}_{j\good} <c_{\bad}$. 

Suppose $\underline{P}_{i\good}< \underline{P}_{j\good}$, then set $\mu_{i}(P)=1$ for all $P\in[\underline{P}_{i\good}, \underline{P}_{j\good}]$. Then for any $P\in[\underline{P}_{i\good},\underline{P}_{j\good})$, all customers who end up buying buy from $i$. Because $PD^{1}(P)$ strictly increases on $[0,c_{\bad}]$, type $\good$ of firm $i$ will deviate from $\underline{P}_{i\good}$ to any $P\in[\underline{P}_{i\good}, \underline{P}_{j\good})$. This rules out $\underline{P}_{i\good}< \underline{P}_{j\good}$. 

Price $\underline{P}_{i\good}\leq \underline{P}_{j\good}$ attracts all customers with $h(v)\geq \underline{P}_{i\good}$. Profit from $\underline{P}_{i\good}$ is $\underline{P}_{i\good}[1-F_{v}(h^{-1}(\underline{P}_{i\good}))]$, which must equal the profit $c_{\bad}(1-\mu_0)[1-F_{v}(h^{-1}(c_{\bad}))]$ from $\overline{P}_{i\good} =c_{\bad}$. 
This determines $\underline{P}_{i\good}$. 
Type $\good$ must be indifferent between all $P\in[\underline{P}_{i\good},c_{\bad})$ in the support of $\sigma_{i}^{\good *}$, which determines firm $j$'s mixing cdf $\sigma_{j}^{\good *}$ by 
$P[1-\mu_0\sigma_{j}^{\good *}(P)][1-F_{v}(h^{-1}(P))]=c_{\bad}(1-\mu_0)[1-F_{v}(h^{-1}(c_{\bad}))]$. The same indifference condition for $j$ determines $\sigma_{i}^{\good *}$, so $\sigma_{\x}^{\good *}=\sigma_{\y}^{\good *}$. 
Solving for $\sigma_{i}^{\good *}$ yields 
$\sigma_{i}^{\good *}(P) =\frac{1}{\mu_{0}}-\frac{c_{\bad}(1-\mu_0)[1-F_{v}(h^{-1}(c_{\bad}))]}{\mu_0P[1-F_{v}(h^{-1}(P))]}$. 
\end{proof}

The equilibrium selected by the Intuitive Criterion in Theorem~\ref{thm:unique} is similar to the symmetric equilibrium in the homogeneous consumer case studied in \cite{janssen+roy2015} Lemma A.1 in which 
$P_{\bad}=c_{\bad}$ and type $\good$ mixes atomlessly over $[(1-\mu_0)c_{\bad}+\mu_0c_{\good},c_{\bad}]$. Thus trade occurs at $c_{\bad}$ with probability $(1-\mu_0)^2$, otherwise at some $P_{\good}<c_{\bad}$. The \emph{ex ante} expected price under incomplete information is lower than under complete iff $\mu_0\notin(\underline{\mu},\overline{\mu})\subset(0,1)$. 

The above symmetric equilibrium is not unique, contrary to the claim in \cite{janssen+roy2015} Lemma A.1. For example, both firms pooling on $P_{0}=c_{\bad}$ is an equilibrium when $c_{\bad}$ is below a cutoff. 
Similarly, with heterogeneous consumers, pooling on $c_{\bad}$ is an equilibrium for low $c_{\bad}$. However, it does not pass the Intuitive Criterion, regardless of whether valuations differ among customers or not. If belief is set to $1$ on prices below $c_{\bad}$, then the good types will deviate to $c_{\bad}-\epsilon$ and at least double their demand.

\subsubsection{Comparison to public negatively correlated quality and cost}
\label{sec:completeinfonegcorrcompar}

This section compares competition under incomplete and complete information when cost and quality are negatively associated. 
When both firms have the same type, both price at their marginal cost, as usual in symmetric Bertrand competition. Thus trade occurs at price $c_{\good}=0$ with probability $\mu_0^2$ (when both firms have type $\good$), but at $c_{\bad}$ with probability $(1-\mu_0)^2$. 
For asymmetric firms and a constant quality premium, trade occurs at $c_{\bad}+\nu$. The probability of unequal types is $2\mu_0(1-\mu_0)$.

By contrast, incomplete information results in $P_{\bad}=c_{\bad}$ and type $\good$ mixing atomlessly over $[\underline{P}_{\good},c_{\bad})$ for some $\underline{P}_{\good}>0$. Therefore trade occurs at $c_{\bad}$ with probability $(1-\mu_0)^2$, otherwise at some $P_{\good}<c_{\bad}$. The \emph{ex ante} expected price under incomplete information is lower than under complete iff $\mu_0\notin(\underline{\mu},\overline{\mu})\subset(0,1)$. 

Now assume that $h(v)-v$ increases in $v$, strictly for some $v$. Focus on asymmetric firms. 
Type $\bad$ does not set any $P<c_{\bad}$ attracting positive demand, thus type $\good$ is a monopolist on $P<h(c_{\bad})$. 
If the complete information monopoly price $P_{\good}^{m}$ 
of the good type is below $h(c_{\bad})$, then $\good$ sets $P_{\good}^{m}$, type $\bad$ gets zero demand and may set any $P\geq h^{-1}(P_{\good}^{m})$. This outcome is similar to the case of a constant quality premium. 

The more interesting case is $P_{\good}^{m}>h(c_{\bad})$,\footnote{Sufficient for $P_{\good}^{m}>h(c_{\bad})$ is $\frac{d}{dP}P[1-F_{v}(h^{-1}(P))]>0$ for all $P\in[0,h(c_{\bad})]$. 
} 
where type $\good$ raises price until the consumers with the lowest valuations above $c_{\bad}$ prefer to switch to $\bad$ charging $P_{\bad}=c_{\bad}$. These switchers are captive for type $\bad$, inducing it to raise price above $c_{\bad}$, which in turn loosens the incentive constraint of type $\good$, allowing it to raise price. The good type ends up pricing between $h(c_{\bad})$ and $P_{\good}^{m}$, and the bad type strictly above $c_{\bad}$. 
This result is similar to the above-competitive pricing found in Section~\ref{sec:completeinfo} for positively correlated public cost and quality, but differs from privately known cost and quality. 

Unlike with public types, private information about negatively correlated cost and quality leads to $P_{\good}<P_{\bad}=c_{\bad}$, with $P_{\good}>c_{\good}=0$, and if the firms have different types, then demand for the bad type is zero (Theorem~\ref{thm:unique}). The reason is that the good type signals its private quality by reducing price. 
Overall, price may end up higher or lower than under public cost and quality. 

\section{Conclusion}

Customer heterogeneity turns out to be important for optimal pricing decisions when better quality producers have higher costs, independently of private information about cost and quality. When higher quality costs less, firms price similarly when facing heterogeneous consumers as when homogeneous. Because information is valuable only if it changes decisions, a firm may estimate the value of gathering information on consumer preferences using its own and competitors' cost and quality data, if available. 

Asymmetric information about qualities and costs substantively affects pricing only when higher quality firms have a lower marginal cost. If smaller cost implies better quality, then a low price can credibly signal quality. This is because cutting price reduces the profit less for a firm with lower cost. In some situations, other costly signals of quality are feasible, for example advertising or warranties, but in other markets such as insurance, warranties are uncommon. Even if available, warranties or advertisements may not be the best way to signal---price cuts may be a cheaper or more accurate way for a high quality firm to distinguish itself. 

A low price used as a signal is similar to limit pricing, i.e.\ a monopolist keeping potential entrants out of the market by convincing them of its low cost by way of a low price. In limit pricing, charging a low price is anti-competitive, unlike in this paper. 

One policy implication of this paper is that a regulator maximising total or consumer surplus should discourage quality certification (and other methods to make quality public) when cost and quality are negatively correlated across firms and private. 
However, when better quality producers have higher costs, publicising quality may increase or decrease price and total and consumer surplus. Making information complete improves welfare \emph{ex ante} when firms' types are unknown if the cost difference between firms is expected to be large relative to the quality difference or consumers are nearly certain of the firms' types. Thus in mature industries, certification is likely to increase surplus. 

If the regulator knows the firms' costs and qualities, then welfare-maximisation suggests that with symmetric firms, this information should be revealed. If the firms differ and the better quality producer has a lower cost, then revelation reduces surplus. With asymmetric firms where the higher quality one has a greater cost, the price effect of making information public depends on the relative qualities and costs. 


The comparison between positively and negatively associated cost and quality suggests other policy implications. To maximise welfare under privately known cost and quality, the correlation between these should be made negative. Two ways to do this are to reward good quality (industry prizes for the best product) and punish for flawed products (fines, lawsuits). The quality of larger firms should be checked the most frequently to give them the greatest motive to improve it, because they are likely the low cost producers. 
Firms whose quality and cost are uncertain to consumers (e.g.\ start-ups) should receive targeted assistance with cost reduction if their quality is high, and with quality improvement  if their costs are low. 

If qualities and costs are negatively associated and private, then a merger to duopoly need not increase prices by much. Thus the optimal antitrust policy depends on the correlation of cost and quality in an industry.

\bibliographystyle{ecta}
\bibliography{teooriaPaberid} 
\end{document}